\newlength{\bredde}
\def\slash#1{\settowidth{\bredde}{$#1$}\ifmmode\,\raisebox{.15ex}{/}
\hspace*{-\bredde} #1\else$\,\raisebox{.15ex}{/}\hspace*{-\bredde} #1$\fi}
\newcommand{\diag}{{\rm diag\,}}
\newcommand{\eins}{\leavevmode\hbox{\small1\kern-3.8pt\normalsize1}}
\newcommand{\be}{\begin{equation}}
\newcommand{\ee}{\end{equation}}
\newcommand{\bee}{\begin{eqnarray}}
\newcommand{\eee}{\end{eqnarray}}
\newcommand{\GinUE}{\textup{GinUE}}
\newcommand{\GinOE}{\textup{GinOE}}
\newtheorem{thm}{Theorem}[section]
\newtheorem{cor}[thm]{Corollary}
\newtheorem{lem}[thm]{Lemma}
\newtheorem{prop}[thm]{Proposition}
\newtheorem*{prob*}{Problem}
\newtheorem*{thm*}{Theorem}
\theoremstyle{definition}
\newtheorem{defn}[thm]{Definition}
\newtheorem*{defn*}{Definition}
\newtheorem{rem}[thm]{Remark}
\newtheorem*{rem*}{Remark}
\numberwithin{equation}{section}
\newcommand{\erfc}{\text{erfc}}
\providecommand{\keywords}[1]
{
  \small	
  \textbf{Keywords:} #1
}
\title{Mean left-right eigenvector self-overlap in the real Ginibre ensemble}
\author{T. R. W\"urfel, M. J. Crumpton, Y. V. Fyodorov}
\affil{Department of Mathematics, King's College London, London WC2R 2LS, UK}
\date{\today}
\begin{document}

\maketitle
\begin{abstract}
\noindent
We study analytically the Chalker-Mehlig mean diagonal overlap $\mathcal{O}(z)$ between left and right eigenvectors associated with a complex eigenvalue $z$ of $N\times N$ matrices in the real Ginibre ensemble (GinOE). We first derive a general finite $N$ expression for the mean overlap and then investigate several scaling regimes in the limit $N\rightarrow \infty$. While in the generic spectral bulk and edge of the GinOE the limiting expressions for $\mathcal{O}(z)$ are found to coincide with the known results for the complex Ginibre ensemble (GinUE), in the region of eigenvalue depletion close to the real axis the asymptotic for the GinOE is considerably different. We also study numerically the distribution of diagonal overlaps and conjecture that it is the same in the bulk and at the edge of both the GinOE and GinUE, but essentially different in the depletion region of the GinOE.
\end{abstract}
\keywords{non-Hermitian random matrices, real Ginibre ensemble, bi-orthogonal eigenvectors, eigenvector overlaps, eigenvalue depletion, bulk and edge statistics}


\section{Introduction}\label{sec:intro}

Random matrices of finite size $N\times N$ are often categorized in terms of their global symmetries, influencing  statistical properties of their eigenvalues and eigenvectors.  The two major categories to distinguish are matrices which are self-adjoint (Hermitian) from their non-self-adjoint (non-Hermitian) counterparts. The former are defined as satisfying the condition  $H=H^\dagger \equiv \bar{H}^T$, with $^\dagger$ standing for Hermitian conjugation,  $^T$ denoting the transpose of the matrix and  the bar standing for the  complex conjugation of its entries.  All Hermitian matrices are by definition {\it normal}, with vanishing commutator $[H,H^\dagger]=0$, which ensures that $H$ is diagonalizable by a unitary transformation. The Hermiticity of $H$ ensures all ensuing eigenvalues $\lambda_n$ to be necessarily real.

On the other hand, \emph{non-Hermitian} matrices, which are denoted by $X$,  may or may not be normal, in the latter case $XX^\dagger \neq X^\dagger X$. Such matrices generically have the majority of the eigenvalues $z_n$ ($n=1,\ldots,N$) located in the full complex plane $\mathbb{C}$, although some of them may still remain on the real line $\mathbb{R}$. The non-normality of such matrices is well known to lead to important numerical issues, when the matrix size $N$ grows. For example, keeping a fixed precision of calculations might not be sufficient, as some eigenvalues can be ‘ill-conditioned’, i.e. they are much more sensitive to a perturbation of the matrix entries, in contrast to eigenvalues of normal matrices, which are relatively robust.

For a random matrix $X$, we can safely assume that all its $N$ eigenvalues have multiplicity one,  ensuring that $X$ is still diagonalizable by  a transformation to its eigenbasis: $X=S\Lambda S^{-1}$, with $\Lambda = \diag(z_1,\ldots,z_N)$. However for non-normal matrices the matrix $S$ ceases to be unitary: $S^\dagger \neq S^{-1}$. The associated eigenvalue problem then splits and one needs to distinguish between \emph{left} (row) eigenvectors, satisfying $\mathbf{x}_{L_n}^\dagger X = z_n \mathbf{x}_{L_n}^\dagger$ and \emph{right} (column) eigenvectors, which instead satisfy $X \mathbf{x}_{R_n} = z_n \mathbf{x}_{R_n}$. The right eigenvectors are the columns of the matrix $S$, whereas the left eigenvectors are the rows of $S^{-1}$, and generically $\mathbf{x}_{R_n} \neq \mathbf{x}_{L_n}$. In this representation, the sets of left and right eigenvectors automatically satisfy the bi-orthonormality condition $\mathbf{x}_{L_n}^\dagger \mathbf{x}_{R_m} = \delta_{nm}$, for $n,m=1,\ldots,N$, but non-unitarity of $S$ implies that $\mathbf{x}_{L_n}^\dagger \mathbf{x}_{L_m} \neq \delta_{nm}$ and similarly $\mathbf{x}_{R_n}^\dagger \mathbf{x}_{R_m} \neq \delta_{nm}$. The first object that allows qualitative insights into the non-orthogonality of the eigenvectors, is the \textit{overlap matrix}, with entries $\mathcal{O}_{nm} = \left( \mathbf{x}_{L_n}^\dagger \mathbf{x}_{L_m} \right) \left( \mathbf{x}_{R_m}^\dagger \mathbf{x}_{R_n} \right)$. In physical literature the diagonal entries, $\mathcal{O}_{nn}$, are generally referred to as diagonal overlaps, or \textit{self-overlaps}. In numerical analysis literature their square roots are sometimes referred to as the \textit{eigenvalue condition numbers}, as they characterize the sensitivity of the eigenvalue $z_n$ to a perturbation of the entries of $X$, see e.g. \cite{TE,TTRD}. To this end, consider a matrix $X^\prime = X + \varepsilon P$, the second term reflecting an error made with respect to the entries of $X$, which are stored with finite machine precision. $P$ here can be chosen to have its $2$-norm fixed to $\vert \vert P\vert \vert_{2}=1$, such that the strength of such a perturbation is measured via the real parameter $\varepsilon$. Then, the change in eigenvalue $z_n$ of $X$ incurred by a small $\varepsilon$ can be characterized via 
\be\label{perturbIntro}
    \frac{dz_n(\varepsilon)}{d\varepsilon} \bigg\vert_{\varepsilon=0}: = \dot{z}_n(0)=\mathbf{x}_{L_n}^\dagger P \mathbf{x}_{R_n} \ ,
\ee
which can be derived using the bi-orthonormality of the corresponding left- and right-eigenvectors. By using the Cauchy-Schwartz inequality, the magnitude of the shift can be seen to satisfy the following bound:
\be\label{perturbIntro2}
    \vert \dot{z}_n(0) \vert =  \vert \mathbf{x}_{L_n}^\dagger P \mathbf{x}_{R_n} \vert \leq \vert \mathbf{x}_{L_n} \vert \ \vert \vert P\vert \vert_{2} \ \vert \mathbf{x}_{R_n} \vert = \sqrt{(\mathbf{x}_{L_n}^\dagger \mathbf{x}_{L_n}) (\mathbf{x}_{R_n}^\dagger \mathbf{x}_{R_n})} = \mathcal{O}_{nn}^{1/2} \ ,
\ee
where $\mathcal{O}_{nn}$ are the diagonal entries of the overlap matrix.  We indeed see that the magnitude of the perturbation is essentially controlled by $\mathcal{O}_{nn}^{1/2}$, which assumes the minimal value  only if $X$ is normal and all $O_{nn}=1$. In the realm of non-normal matrices one can find numerous examples where $\mathcal{O}_{nn}\gg 1$, indicating that the associated eigenvalues are extremely sensitive to perturbations, see e.g. \cite{TE,TTRD}. In fact, random matrices can be used to provide a regularization of eigenvalue condition numbers of highly non-normal, non-random matrices, see for example \cite{Banks_perturb,Cipolloni23c,Jain_perturb} for more information.

For $X$ taken to be a random matrix from a specified probability measure, the statistics of the entries of the overlap matrix $\mathcal{O}_{mn}$ become an important object of study. The simplest nontrivial choice is to assume that all entries $X_{ij}$ are mean-zero, independent, identically distributed (i.i.d.) Gaussian numbers, which can be real, complex, or quaternion. This defines the three classical Ginibre ensembles \cite{Ginibre}. Understanding of statistics of the overlap matrix in this setting has been  influenced heavily by the seminal papers of Chalker and Mehlig \cite{CM,MC} who treated the \emph{complex Ginibre} ensemble. We denote the latter ensemble as GinUE in the rest of the paper, to distinguish it from its real (GinOE) and quaternion (GinSE) counterparts, the nomenclature relating to the classical Dyson Hermitian ensembles - GUE, GOE and GSE. In particular, Chalker and Mehlig addressed the statistics of the overlap matrix $\mathcal{O}_{mn}$ via considering the following single-point and two-point correlation functions
\be\label{ChalkerMehligOverlap}
    \mathcal{O}(z) \equiv \bigg\langle \frac{1}{N}\sum_{n=1}^N \mathcal{O}_{nn} \ \delta(z-z_n) \bigg\rangle \hspace{0.75cm}  \text{and} \hspace{0.75cm} \mathcal{O}(z_1,z_2) \equiv \bigg\langle \frac{1}{N^2}\sum_{\substack{n,m=1 \\ n\neq m}}^N \mathcal{O}_{nm} \ \delta(z_1 -z_n) \  \delta(z_2-z_m) \bigg\rangle\ ,
\ee
where the angular brackets stand for the expectation value with respect to the probability measure associated with the ensemble in question,  GinUE for the particular case studied by Chalker and Mehlig. Here $\delta (z-z_n)$ is the Dirac delta mass at the eigenvalue $z_n$, so that the empirical density of eigenvalues in the complex plane $z$ reads $\rho^{(\text{emp})}_N(z) = \frac{1}{N} \sum_{n=1}^N \delta(z - z_n)$. It is evident that $\mathcal{O}(z)$ describes the conditional expectations of $\mathcal{O}_{nn}$ and we can define the mean conditional self-overlap as 
\be\label{conditional}
    \mathbb{E}\left( z  \right) \equiv \mathbb{E}\left(\mathcal{O}_{nn} \ \vert \ z = z_n \right) = \frac{\mathcal{O}(z)}{ \rho\, (z) } \ \text{,}
\ee
where $\rho\, (z)$ is the mean spectral density, defined via
\begin{equation}\label{meanden}
    \rho\, (z) \equiv \langle \ \rho_N^{(\text{emp})}(z) \ \rangle = \bigg\langle \frac{1}{N}\sum_{n=1}^N  \delta(z-z_n) \bigg\rangle \ . 
\end{equation}
Recall that, as $N\to \infty$, the asymptotic mean eigenvalue density is nonvanishing and uniform only inside the unit circle in the complex plane:  $ \rho (z)  \approx 1/\pi $ for $\vert z \vert^2 <  1 $ and zero otherwise \cite{Girko}. Chalker and Mehlig were able to extract the leading asymptotic behaviour of $\mathcal{O}(z)$ and $\mathcal{O}(z_1,z_2)$ as $N$ tends to $\infty$. Choosing the variance of the entries of $X$ to be $1/N$, they found that $\mathcal{O}(z) \approx N \left(1-\vert z \vert^2 \right)/\pi $ inside the unit disk $\vert z \vert^2 <  1 $ and zero otherwise.  Consequently, one should expect $\mathcal{O}_{nn} \sim N$ for eigenvalues inside the disk, which is thus parametrically larger than the value for normal matrices, where $\mathcal{O}_{nn} = 1$.  Note that only the rescaled correlator $ \widetilde{\mathcal{O}}(z) = \mathcal{O}(z)/N $ is well-defined and finite in the limit $N\to \infty$.
 
The Chalker-Mehlig (CM) correlators and other aspects pertaining to eigenvector non-orthogonality in random matrix ensembles have attracted growing interest in the theoretical physics community over the past decades, see e.g. \cite{JNNPZ,SFPB,MS,FM,GS,FSav2,GKLMRS, BGNTW,BGNTW2,BSV,BNST,NT,AFK,FyoOsm22}. The main interest being in extending CM correlators from the GinUE to more general classes of random matrix models.  Apart from the important issues of eigenvalue stability, the eigenvector non-orthogonality is known to play an important role in describing transient behaviour in complex systems with classical dynamics and related questions \cite{GHS,Grela,EKR,MBO,TNV, GOCNT,FGNNW}, as well as in producing intriguing features in their quantum counterparts \cite{Cipolloni23a,Cipolloni23b,GKR23}. 
Another strong motivation comes from the field of quantum chaotic wave scattering, where non-Hermitian random matrix ensembles, different from Ginibre, play a prominent role, see e.g. \cite{FSom1,FSom2,Rotter, FSav1} for some background information. The overlap matrix $\mathcal{O}_{mn}$ appears naturally in many scattering observables, for example in the derivation of decay laws \cite{SS}, in the ‘Petermann factors’ describing excess noise in open laser resonators \cite{SFPB,PSB}, in issues of increased sensitivity of resonance widths to small perturbations \cite{FSav2,GKLMRS} and in the shape of reflected power profiles \cite{FyoOsm22}. In that context both $\mathcal{O}(z)$ and $\mathcal{O}(z_1,z_2)$ have been studied theoretically \cite{SFPB,MS,FM,FSav2,FyoOsm22} and experimentally \cite{GKLMRS,CG,CG2}. 

In a not unrelated development, mathematically rigorous studies of CM correlators became a field of considerable activity more recently, see e.g. \cite{Dubach21,Dubach23,WS}.  The usage of free probability techniques \cite{JNNPZ} to compute the self-overlap, Eq. \eqref{ChalkerMehligOverlap}, allowed extensions to invariant ensembles with non-Gaussian weights \cite{BNST,NT}. Since $\mathcal{O}(z)$ is known in the GinUE at finite matrix size \cite{CM,MC} it also became possible to compute it for products of small Ginibre matrices \cite{BSV}. It was shown that one- and two-point functions of eigenvector overlaps, conditioned on an arbitrary number of eigenvalues in the GinUE, lead to determinantal structures \cite{ATTZ}. Deep insights into Dysonian dynamics related to eigenvalues have been made, from different angles, in   \cite{BGNTW,GW,BD,Yabuoku}.
Lower \cite{Cipolloni23c,Cipolloni22} and upper \cite{Banks_perturb,Jain_perturb,ErdJi} bounds on diagonal eigenvector overlaps have been provided for a fixed non-Hermitian matrix perturbed by random matrices with i.i.d. entries, even beyond Gaussian case. Some other properties of eigenvectors of non-normal random matrices have also been studied rigorously, see  \cite{BZ,CR,Esaki,Movassagh}.

Finally, let us mention that in addition to analysing the correlators in Eq. \eqref{ChalkerMehligOverlap}, Chalker and Mehlig put forward a conjecture for the far tail behaviour of the distribution of the random variable $\mathcal{O}_{nn}$ for the GinUE as $N \rightarrow \infty$. Based on both numerical evidence and simple eigenvalue repulsion arguments, illustrated by a solvable $2\times 2$ GinUE matrix, they predicted that for large overlaps the probability density of $\mathcal{O}_{nn}$ must exhibit a power law tail proportional to $1/\mathcal{O}_{nn}^{3}$.  Such tail would make all the positive integer moments beyond $\mathcal{O}(z)$ divergent. This conjecture has been rigorously proved by two different approaches in Bourgade and Dubach \cite{BD} and Fyodorov \cite{FyodorovCMP}, in fact recovering the full form of the probability density beyond the tail region. While the work of Bourgade and Dubach proceeded on studying the off-diagonal correlator $\mathcal{O}(z_1,z_2)$ for the GinUE, the paper by Fyodorov revealed that for real eigenvalues of the GinOE the diagonal overlaps $\mathcal{O}_{nn}$ are distributed with an even heavier probability density tail, decaying as $1/\mathcal{O}_{nn}^{2}$. This implies that for real Ginibre matrices the mean self-overlap $\mathcal{O}(z)$ is divergent on the real line. Understanding both the distribution  of $\mathcal{O}_{nn}$ and the CM mean $\mathcal{O}(z)$ for complex eigenvalues of the GinOE remained however an outstanding problem. \\

\noindent
In the present paper we make the first step towards addressing the above issues, and present the results for the mean diagonal CM correlator $\mathcal{O}(z)$ for the GinOE in the complex plane, first at finite $N$ and then in various scaling regimes as $N\gg 1$. We will also systematically  compare our findings with both results for complex eigenvalues in the GinUE and for real eigenvalues in the GinOE.

The rest of the paper is organized as follows. We present our main findings in Section \ref{sec:MainRes}, in particular, the mean self-overlap at finite matrix size $N$ is given in Theorem \ref{thm:MainRes}. Asymptotic results for the mean self-overlap, as $N\to \infty$, are given in the bulk, Corollary \ref{cor:GinOEbulkStrong}, at the spectral edge, Corollary \ref{cor:GinOEedgeStrong}, and in an eigenvalue depleted region of the droplet in Corollary \ref{cor:GinOEdepletionStrong}. We then compare with similar results for the GinUE, and use numerical simulations, both to corroborate analytical results and provide insights into yet analytically unavailable distributions of $O_{nn}$ for the GinOE. Finally, the corresponding proofs of our findings are presented in Section \ref{sec:MainResDerivation} for finite $N$ - Theorem \ref{thm:MainRes} - and in Section \ref{sec:AsymptoticAnalysis} for the asymptotic results - Corollary \ref{cor:GinOEbulkStrong}, \ref{cor:GinOEedgeStrong} and \ref{cor:GinOEdepletionStrong}.


\section{Statement and Discussion of Main Results}\label{sec:MainRes}

\noindent
In order to state our main results, we start by introducing the necessary notation and statements about random real Ginibre matrices in Section \ref{subsec:DefRem}. We then present our main results for the mean self-overlap in Section \ref{subsec:MainRes} for both finite matrix size $N$ and in several large $N$ regimes of the complex plane. We follow up with a discussion about connections to previously known results, comparisons to numerical simulations and open problems.


\subsection{Remarks on real and complex Ginibre ensembles}
\label{subsec:DefRem}

\begin{defn}\label{def:GinOE}
    Let $G = \left( G_{ij} \right)_{i,j=1}^N$ be an $N \times N$ matrix containing i.i.d. real Gaussian entries with mean zero and unit variance, such that the off-diagonal entries $G_{ij}$ and $G_{ji}$ are uncorrelated. 
    The joint probability density function (JPDF) of matrices $G$ is defined with respect to the flat Lebesgue measure, $dG = \prod_{i,j=1}^{N} dG_{ij}$, via 
    \be\label{GinDistrib}
        P_{\text{GinOE}}(G) \ dG = \frac{1}{C_{N}} \ \exp \left[ -\frac{1}{2} \Tr \left( GG^T \right)  \right] \ dG, \quad  C_{N} = (2\pi)^{N^2/2} \ \text{.}
    \ee
     The corresponding ensemble is called the \emph{real Ginibre} ensemble, and is denoted as GinOE. Its counterpart for matrices with complex i.i.d. entries is the \emph{complex Ginibre} ensemble (GinUE). For recent reviews of available results on the GinUE and GinOE see \cite{BF1,BF2}. The remarks below provide a discussion of main facts on both ensembles which will be of direct relevance for the present paper. 
    
\end{defn}

\begin{rem}\label{rem:GinOEDef}
     Generically, the spectrum of GinOE matrices consists of real eigenvalues $\lambda\in \mathbb{R}$ and complex eigenvalues $z=x+iy, \, y\ne 0$. Complex eigenvalues of real matrices, $G$, always appear in conjugated pairs, i.e. $\bar{z}=x-iy$ is an eigenvalue of $G$ iff $z$ is an eigenvalue of $G$. Correspondingly, the mean spectral density, Eq. \eqref{meanden}, necessarily has the form $\rho\,(z) = \rho^{(c)}(z)+\delta(y) \rho^{(r)}(x)$, where $\rho^{(c/r)}(z)$ describes the mean density of complex/real eigenvalues, respectively. On the other hand, with probability one, the eigenvalues of GinUE matrices are complex without conjugate counterparts, due to the entries of GinUE matrices being complex.
    
     In Figure \ref{fig:spectra_Real_Gin}, we show examples of the spectra of real Ginibre matrices for three different values of $N$.  As is well-known, the majority of complex eigenvalues lie inside a circle of radius $\sqrt{N}$, with a subleading fraction (of the order of  $1/\sqrt{N}$ for $N\gg 1)$ of purely real eigenvalues. As $N$ tends to $\infty$ the support of the spectrum approaches a uniform disc.  

    \begin{figure}[h]
        \centering
        \includegraphics[scale = 0.235]{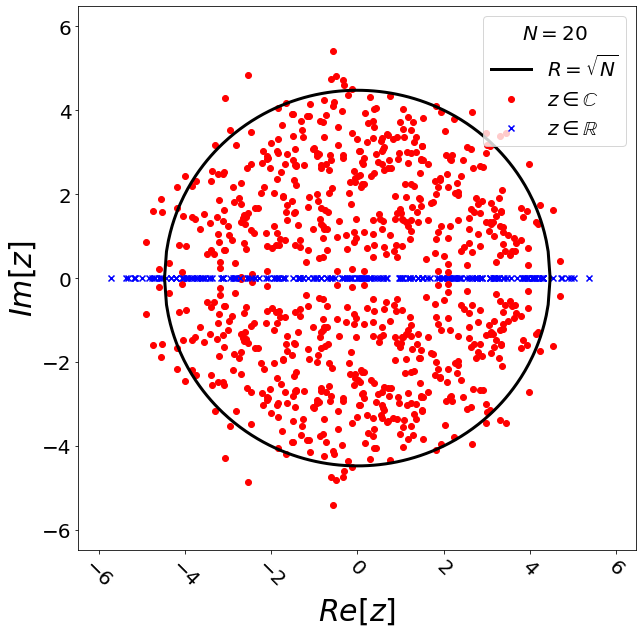}
        \hspace{0.1cm}
        \includegraphics[scale = 0.235]{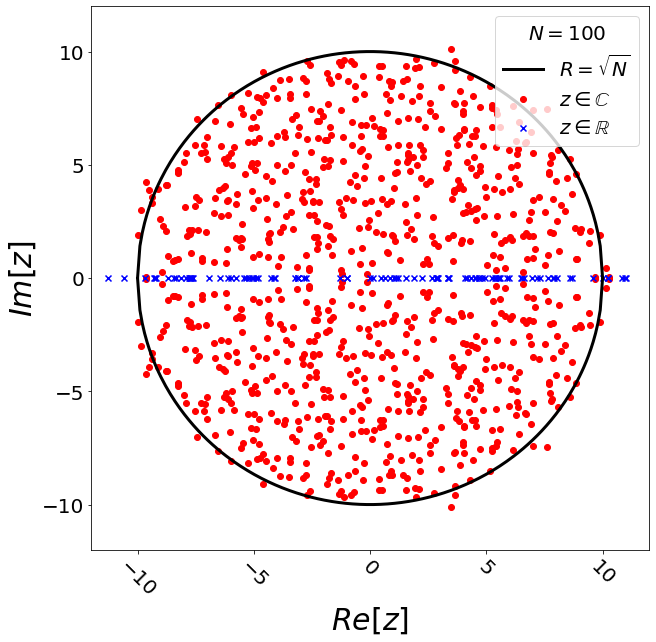}
        \hspace{0.1cm}
        \includegraphics[scale = 0.235]{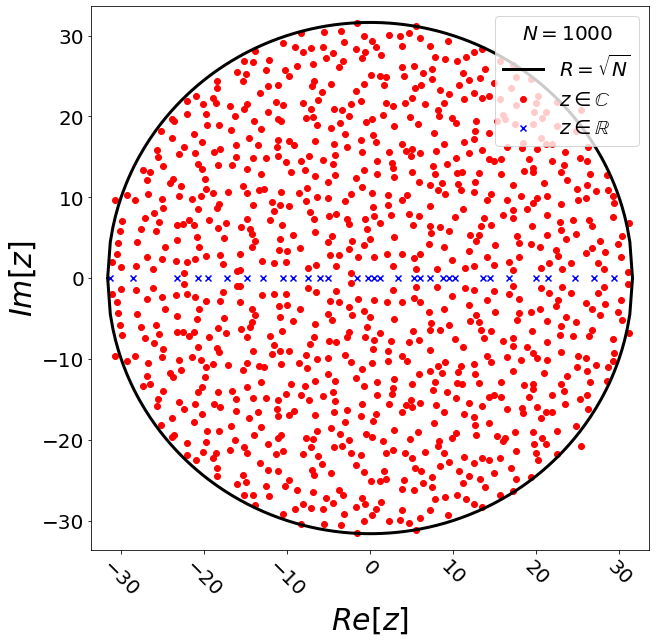}
        
        \caption{\small Spectra of real Ginibre matrices for three different values of $N$, $N=20$ (left), $N=100$ (centre) and $N=1000$ (right). Each plot contains 1,000 samples of eigenvalues and has a solid black line depicting a circle of radius $\sqrt{N}$. Complex eigenvalues are shown in red and real eigenvalues are shown in blue.}
        \label{fig:spectra_Real_Gin}
    \end{figure}
    \end{rem}
     
   \begin{rem} 
   Results for the mean densities (both real and complex) are well known in the GinOE and GinUE at finite $N$ \cite{Ginibre,EKS,Edelman,KS}. The mean density of complex eigenvalues is of particular interest in this work and is given for finite $N$ in the GinOE by
    \be\label{GinOEdensityComplex}
    \begin{split}
        \rho_N^{(\text{GinOE,c})}(z) &= \sqrt{\frac{2}{\pi}} \ \vert y \vert \ e^{2y^2} \ \text{erfc}\left(\sqrt{2}\ \vert y \vert \right) \ \frac{\Gamma\left(N-1, \vert z \vert^2\right)}{\Gamma\left(N-1 \right)} \ ,
    \end{split}
    \ee 
    see e.g. \cite[Eq. (2.46)]{BF2} and \cite[Theorem 6.2]{Edelman}.  In the above, $\Gamma(N)$  denotes the standard Euler $\Gamma$-function and $\Gamma(N,|z|^2)$  denotes the incomplete (upper) $\Gamma$-function defined as 
    \be\label{Eq:incompleteGamma}
        \Gamma\left( N , a \right) = \Gamma\left(N\right) \ e^{-a} \ \sum_{k=0}^{N-1} \frac{a^k}{k!} =  \int_{a}^{\infty} du \ u^{N-1} \ e^{-u} \ .
    \ee
    We have also made use of the complementary error function ${\erfc(x) = 1 - \erf(x)}$, where $\erf(x) = \frac{2}{\sqrt{\pi}} \int_0^x e^{-t^2} dt$.
  
    \noindent
    In the GinUE, several equivalent representations for the mean density can be found in e.g. \cite{Ginibre}, \cite[Eq. (20)]{WS}, \cite[Proposition 2.2]{BF1} and \cite[Eq. (18.2.11)]{KS}, which we prefer to write as
    \be
        \rho^{(\text{GinUE},c)}_N(z) = \frac{1}{\pi} \ \frac{\Gamma\left( N, \vert z \vert^2 \right)}{\Gamma\left( N \right)} = \frac{1}{\pi } e^{- \vert z \vert^2} \sum_{n=0}^{N-1}  \frac{ \vert z \vert^{2n}}{n!} \ .
        \label{eq:GinUE_density}
    \ee
  The mean density of \emph{real} eigenvalues in the GinOE is also known at finite $N$, see e.g. \cite{EKS,FN}, but is not needed for our purposes.

    \noindent
    
\end{rem}

\begin{rem} \label{rem:asymp_dens}

    It is worth discussing the large $N$ asymptotic behaviour of the mean eigenvalue density in more detail. A comparison between the densities of complex eigenvalues in the GinOE and GinUE at large $N$ is shown in Figure \ref{fig:Heatmaps} using heatmaps. The heatmap associated with the GinUE features two distinct scaling regimes: a spectral bulk inside the disk and edge along the disk circumference. Similar regimes are also seen in the plot for the GinOE. 

    \begin{figure}[h!]
        \centering
        \includegraphics[height = 6.5cm]{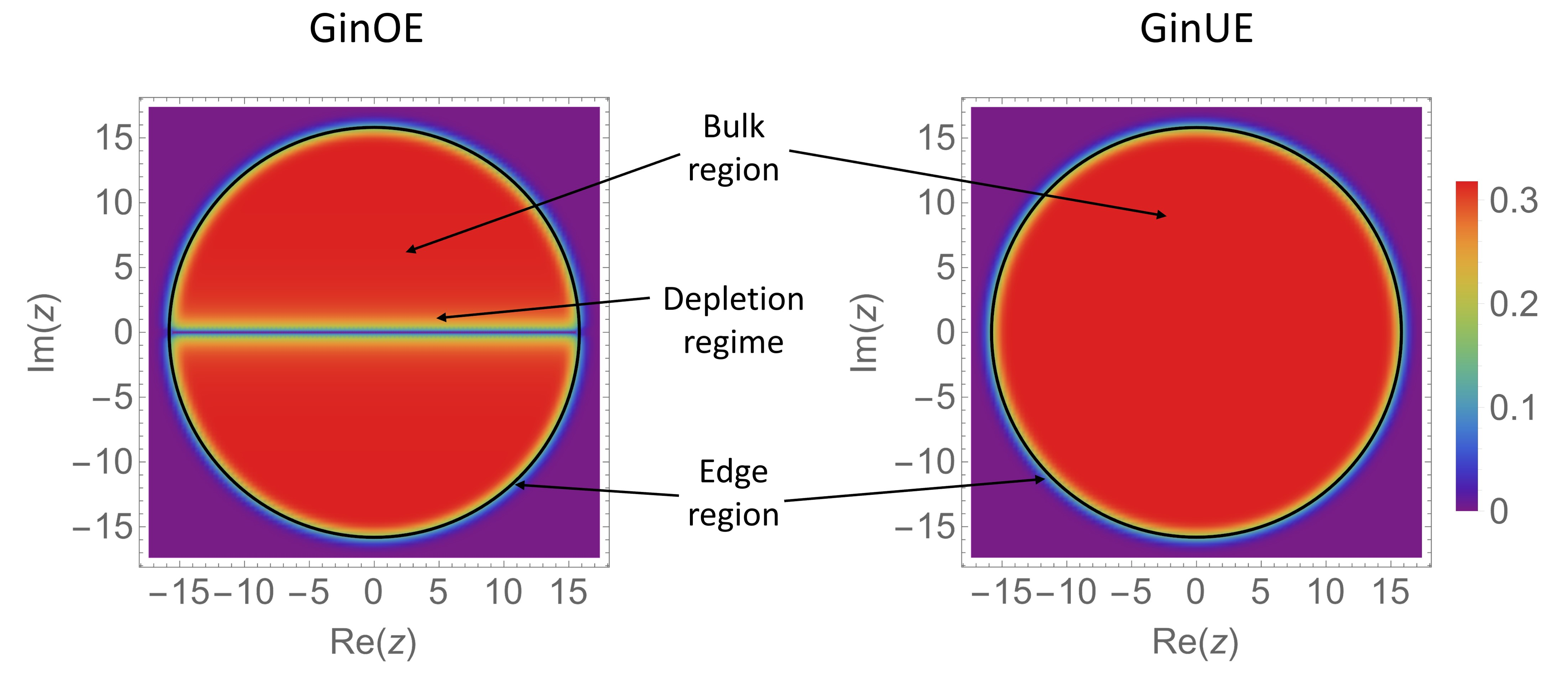}
        
        \caption{\small Large $N$ density of complex eigenvalues in the GinOE and GinUE with relevant scaling regions labelled. Left: GinOE. Right: GinUE. In both diagrams the solid black circle within the heatmap has radius $\sqrt{N}$ and the density is depicted on a rainbow scale. This diagram explains schematically what is meant by the bulk, edge and depletion regimes of large real and complex Ginibre matrices. Each heatmap is plotted using the analytic expression for the density of complex eigenvalues in the associated ensembles when $N=250$.}
        \label{fig:Heatmaps}
    \end{figure}

    \noindent
    Firstly, in both ensembles it is easy to show that the limiting mean density of complex eigenvalues, after rescaling  $z=\sqrt{N}w$,  
    converges to the uniform density:
    \be
        \rho_{\text{bulk}}^{(\GinUE, c)}(w) =\rho_{\text{bulk}}^{(\GinOE,c)}(w) =  \frac{1}{\pi} \ ,
    \ee
    in the generic bulk of the unit disk, where $\vert w \vert<1$. Similarly, at the edge of the disk, given by  $|z| = \sqrt{N} + \eta$,  the density for both ensembles can be shown to be given by
    \be
        \rho_{\text{edge}}^{(\GinUE, c)}(\eta) =\rho_{\text{edge}}^{(\GinOE,c)}(\eta) = \frac{1}{2 \pi} \erfc\left(\sqrt{2} \eta \right) \ .
    \ee
    In addition, the heatmap in the GinOE case shows yet another scaling regime close to the real axis, essentially for heights of the fraction $\sim 1/\sqrt{N}$ of the disk radius. In that region the density of eigenvalues is reduced when compared to the spectral bulk value. The origin of such a depletion is clearly seen analytically from the presence of the factor $\vert y\vert$ in  Eq. \eqref{GinOEdensityComplex}. Henceforth the scaling associated with this region of the complex plane will be referred to as the \emph{depletion} regime. Inside the bulk depletion regime, i.e. $z=\sqrt{N}x+iy$, with $|x|<1$ and $y \sim O(1)$, the mean GinOE density in the limit $N\to \infty$ converges to
    \be
        \rho_{\text{depletion}}^{(\GinOE,c)}(y) = \sqrt{\frac{2}{\pi}} \  |y| \ e^{2y^2} \erfc\left( \sqrt{2} \ |y| \right) \ ,
    \ee
    see e.g. 
    \cite[Eq. (2.42)]{BF2}, cf. Eq. \eqref{GinOEdensityComplex}.

\end{rem}    

\noindent
After the digressions on the mean densities, let us come back to our main objects of interest, the left and right eigenvectors.  Given the complex eigenvalue $z_n$ we denote the associated left-eigenvector by $\mathbf{x}_{L_n}^\dagger$ and the right-eigenvector by $\mathbf{x}_{R_n}$. The overlap matrix $\mathcal{O}_{nm} = \left( \mathbf{x}_{L_n}^\dagger \mathbf{x}_{L_m} \right) \left( \mathbf{x}_{R_m}^\dagger \mathbf{x}_{R_n} \right)$ of left- and right-eigenvectors is used to define the mean self-overlap $\mathcal{O}(z)$ as in Eq. \eqref{ChalkerMehligOverlap} and its conditional companion $\mathbb{E}(z)$ as in Eq. \eqref{conditional}. Note that $\mathbb{E}\left( z  \right)$ is particularly useful when comparing theoretical predictions with numerical data.\\

\noindent
Results for the mean self-overlap have been established in the GinUE for the entire complex plane, see \cite{CM,MC,WS,BD,FyodorovCMP} for several equivalent forms.  For our purposes we present it as
\be\label{Eq:Overlap_finiteN_GinUE}
    \mathcal{O}^{(\text{GinUE,c})}_N(z) = \frac{1}{\pi} \left[ \frac{\Gamma(N, |z|^2)}{(N-1)!}(N - |z|^2) + \frac{|z|^{2N}}{(N-1)!} e^{-|z|^2} \right] \ .
\ee
For convenience of the reader we give a derivation of Eq. \eqref {Eq:Overlap_finiteN_GinUE} in the Appendix \ref{AppA}, using the results in \cite{FyodorovCMP} as a starting point. 
One can further easily find the bulk asymptotics of this expression, which gives back the original Chalker-Mehlig result,
as well as the corresponding edge asymptotics for the scaling $|z|=\left( \sqrt{N} + \eta  \right)$ given by  
\be\label{edgeasyGinUE}
   \lim_{N\to \infty} \frac{1}{\sqrt{N}}\mathcal{O}_{\textup{edge}}^{\textup{(GinUE,c)}}(z) = \frac{1}{\pi} \left( \frac{1}{\sqrt{2\pi}} \ e^{-2\eta^2}  - \eta \ \textup{erfc}\left( \sqrt{2} \ \eta \right) \right)  \ .
\ee
In the GinOE, so far, results were limited to real eigenvalues only, i.e. when $z=x$, see \cite{FyodorovCMP, FT}.
In the next subsection, we provide results for the mean self-overlap $\mathcal{O}(z)$ of eigenvectors associated with complex eigenvalues in the GinOE at finite $N$ and in several large $N$ scaling regions, dictated by the corresponding scalings of the mean eigenvalue density discussed above.


\subsection{Statement of Main Results for GinOE and comparison to GinUE}
\label{subsec:MainRes}

 Relegating the proofs and technical details to Section \ref{sec:MainResDerivation}, we present our main findings below. The following theorem gives the mean self-overlap $\mathcal{O}(z)$ of eigenvectors associated with a complex eigenvalue $z$ for the real Ginibre ensemble at finite matrix size $N$.

\begin{thm}\label{thm:MainRes}
    Let $G$ be an $N \times N$ random matrix drawn from the GinOE, distributed according to Eq. \eqref{GinDistrib} in Definition \ref{def:GinOE}. The mean self-overlap, Eq. \eqref{ChalkerMehligOverlap}, associated with a complex eigenvalue $z$ at finite matrix size $N$ is given by
    \be\label{Eq:MainResGinOE}
    \begin{split}
        \mathcal{O}^{(\GinOE,c)}_{N}(z) &= \bigg\langle \frac{1}{N}\sum_{n=1}^N \mathcal{O}_{nn} \ \delta(z-z_n) \bigg\rangle_{\textup{GinOE},N}  = \frac{1}{\pi} \ \left( 1 + \sqrt{\frac{\pi}{2}} \ \exp \left[ 2y^2 \right] \ \frac{1}{2 \vert y \vert} \ \textup{erfc}\left(\sqrt{2} \ \vert y \vert \right) \right) \\
        &\times \bigg[  \ \frac{\Gamma\left(N-1,\vert z \vert^2 \right)}{(N-2)!}  \bigg( N - 1 - \vert z \vert^2 \bigg) +  \frac{\vert z\vert^{2(N-1)}}{(N-2)!}  \ e^{-\vert z \vert^2} \bigg] \ .
    \end{split}
    \ee
\end{thm}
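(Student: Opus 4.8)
The plan is to isolate the complex eigenvalue $z$ by a partial real Schur decomposition of $G$ and then reduce the remaining average to a Gaussian (super)integral, following the strategy used for real eigenvalues of the GinOE in \cite{FyodorovCMP}. First I would express $\mathcal{O}(z)$ in the regularized resolvent form of that reference, so that the delta mass at $z$ and the self-overlap weight are generated simultaneously from an analytic function of the Gaussian matrix $G$. Conjugating $G$ by a suitable orthogonal matrix brings it to the block form
\be
    G \;\longmapsto\; \begin{pmatrix} B & \mathbf{w} \\ \mathbf{0} & G' \end{pmatrix},
\ee
in which $B\in\mathbb{R}^{2\times 2}$ has the conjugate pair $\{z,\bar z\}$ as its eigenvalues, $\mathbf{w}\in\mathbb{R}^{2\times(N-2)}$, and $G'$ is $(N-2)\times(N-2)$. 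The induced measure factorizes: $G'$ is an $(N-2)$-dimensional GinOE matrix, $\mathbf{w}$ has i.i.d. Gaussian entries and is independent of $G'$, and $B$ carries the weight and Jacobian that, once parametrized by $|z|$ and $y=\IM z$, reproduce the complex-eigenvalue density \eqref{GinOEdensityComplex}; this is the same change of variables that underlies Edelman's derivation of \eqref{GinOEdensityComplex}.

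In these coordinates the right eigenvector of the peeled eigenvalue $z$ is $(\mathbf{v}_R,\mathbf{0})$ with $B\mathbf{v}_R=z\mathbf{v}_R$, while the left eigenvector is $(\mathbf{u}_L^{\dagger},\,\mathbf{u}_L^{\dagger}\mathbf{w}(z-G')^{-1})$ with $\mathbf{u}_L^{\dagger}B=z\mathbf{u}_L^{\dagger}$ and the bi-orthonormalization $\mathbf{u}_L^{\dagger}\mathbf{v}_R=1$. Since orthogonal conjugation preserves all vector norms,
\be
    \mathcal{O}_{nn} \;=\; \|\mathbf{v}_R\|^2\Big(\|\mathbf{u}_L\|^2 + \mathbf{u}_L^{\dagger}\mathbf{w}\,(z-G')^{-1}(\bar z - G'^{T})^{-1}\,\mathbf{w}^{T}\mathbf{u}_L\Big),
\ee
and averaging over the Gaussian block $\mathbf{w}$ collapses the second term to a trace, yielding the factorization
\be
    \big\langle \mathcal{O}_{nn}\big\rangle_{\mathbf{w}} \;=\; \kappa(B)\,\Big(1 + \Tr\big[(z-G')^{-1}(\bar z - G'^{T})^{-1}\big]\Big),
\ee
where $\kappa(B)=\|\mathbf{u}_L\|^2\|\mathbf{v}_R\|^2\ge 1$ is the intrinsic left/right non-orthogonality of the $2\times 2$ block and depends only on its invariants. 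As $B$ and $G'$ are independent, these two factors decouple under the remaining averages.

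It then remains to evaluate the two averages separately. For the $G'$-average I would write $\Tr[(z-G')^{-1}(\bar z - G'^{T})^{-1}]$ as the $s$-derivative at $s=0$ of $\log\det[s+(z-G')(\bar z-G'^{T})]$, linearize the determinant, carry out the Gaussian average over $G'\in\GinOE_{N-2}$, and evaluate the resulting finite superintegral; resummation over $|z|^2$ should give
\be
    \Big\langle \Tr\big[(z-G')^{-1}(\bar z - G'^{T})^{-1}\big]\Big\rangle_{\GinOE_{N-2}} = N-2-|z|^2 + \frac{|z|^{2(N-1)}e^{-|z|^2}}{\Gamma(N-1,|z|^2)}\,.
\ee
Combined with \eqref{GinOEdensityComplex}, this reassembles, up to the $B$-factor, exactly the GinUE mean overlap \eqref{Eq:Overlap_finiteN_GinUE} at matrix size $N-1$; that is, the computation should exhibit the structural identity $\mathcal{O}^{(\GinOE,c)}_N(z)=\big(1+\sqrt{\pi/2}\,e^{2y^2}\tfrac{1}{2|y|}\erfc(\sqrt{2}\,|y|)\big)\,\mathcal{O}^{(\GinUE,c)}_{N-1}(z)$. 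The $B$-average is elementary: integrating $\kappa(B)$ against the $2\times2$-block measure reduces to a Gaussian integral over the single orbit parameter conjugate to $y$, which produces the prefactor $1+\sqrt{\pi/2}\,e^{2y^2}\tfrac{1}{2|y|}\erfc(\sqrt{2}\,|y|)$; multiplying the two pieces yields \eqref{Eq:MainResGinOE}.

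I expect the technical heart to be the $G'$-average: over $\GinOE_{N-2}$ the mean squared resolvent is considerably more delicate than its GinUE counterpart because the orthogonal rather than unitary symmetry complicates both the linearization and the evaluation of the superintegral, and extracting the closed incomplete-Gamma form with the correct exponentially small boundary term requires care. Crucially, this average cannot be recovered from known correlators of the $(N-2)$-dimensional GinOE, since the two-point overlap function there is not available, so it must be carried out as a genuine Gaussian matrix integral. A secondary obstacle is the bookkeeping of the partial real Schur decomposition for a complex eigenvalue, whose leading $2\times2$ block has, once its eigenvalues are fixed, a one-parameter orbit of representatives modulo orthogonal conjugation; this must be treated exactly as in the derivation of \eqref{GinOEdensityComplex}. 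Given these two ingredients, the $2\times2$-block integral and the final assembly are routine.
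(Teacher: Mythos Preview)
Your overall strategy coincides with the paper's: partial real Schur decomposition isolating the $2\times2$ block carrying $\{z,\bar z\}$, Gaussian average over the coupling block $\mathbf{w}$ producing $\kappa(B)\,(1+\Tr B^{-1})$, and separate treatment of the $B$-block and the reduced $(N-2)\times(N-2)$ matrix $G'$. The structural identity $\mathcal{O}^{(\GinOE,c)}_N(z)=\big(1+\sqrt{\pi/2}\,e^{2y^2}\tfrac{1}{2|y|}\erfc(\sqrt{2}|y|)\big)\,\mathcal{O}^{(\GinUE,c)}_{N-1}(z)$ you anticipate is correct and is a nice observation.

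There is, however, a genuine gap in your bookkeeping of the Jacobian. You assert that ``the induced measure factorizes: $G'$ is an $(N-2)$-dimensional GinOE matrix \ldots\ and $B$ carries the weight and Jacobian''. This is not so: Edelman's Jacobian contains the factor $\det\big[(x\eins_{N-2}-G')^2+y^2\eins_{N-2}\big]=\det\big[(z-G')(\bar z-G'^{T})\big]$, which depends on $G'$ and cannot be absorbed into the $B$-block measure. Consequently the object you must compute is not the plain average $\big\langle\Tr[(z-G')^{-1}(\bar z-G'^{T})^{-1}]\big\rangle_{\GinOE_{N-2}}$ but the \emph{determinant-weighted} pair
\[
\Big\langle\det\!\big[(z-G')(\bar z-G'^{T})\big]\Big\rangle_{G'}\quad\text{and}\quad
\Big\langle\det\!\big[(z-G')(\bar z-G'^{T})\big]\;\Tr\!\big[(z-G')^{-1}(\bar z-G'^{T})^{-1}\big]\Big\rangle_{G'}\,.
\]
The formula you display for the ``average of the trace'' is in fact the \emph{ratio} of these two, not the unweighted expectation; indeed the latter is ill-defined, since without the determinant regularizer the resolvent trace diverges when $G'$ has an eigenvalue near $z$. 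Your proposal to take an $s$-derivative of $\log\det$ would give precisely this unweighted trace and therefore does not work as stated. The paper instead packages both weighted expectations as the value and the $\mu$-derivative at $\mu=0$ of $\big\langle\det(\mu\eins+B')\big\rangle_{G'}$, rewrites this as a $2(N-2)\times2(N-2)$ block determinant, and averages it over $G'$ via Grassmann integration; the result is an integral over a single radial variable, yielding the incomplete Gamma functions in \eqref{Eq:MainResGinOE}. Once you replace the $\log\det$ step by this $\mu$-deformation of $\det$, the rest of your plan --- the one-parameter Gaussian integral over the $B$-orbit variable producing the $\erfc$ prefactor, and the final assembly --- matches the paper line by line.
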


\noindent
Comparing Eq. \eqref {Eq:MainResGinOE} with its GinUE counterpart Eq. \eqref {Eq:Overlap_finiteN_GinUE}, we see 
that they almost share the term inside the square brackets (with the change $N \to N-1$). The result in the GinUE is fully rotationally symmetric as there is no $y-$dependent factor present. In the GinOE however, the self-overlap is dependent on the distance to the real axis and so is not rotationally symmetric.

\noindent
In Figure \ref{fig:finite_N}, we compare the  mean conditional self-overlap, Eq. \eqref{conditional}, for eigenvectors associated with purely imaginary eigenvalues, $z = iy$, in the GinOE and GinUE at finite $N$, as predicted by our theory and as seen in numerical simulations. As is evident, in the region close to the real line, the mean conditional self-overlap is much larger in the GinOE than the GinUE, implying that the GinOE has a higher degree of non-normality.
    
\begin{figure}[H]
    \centering
    \includegraphics[scale = 0.215]{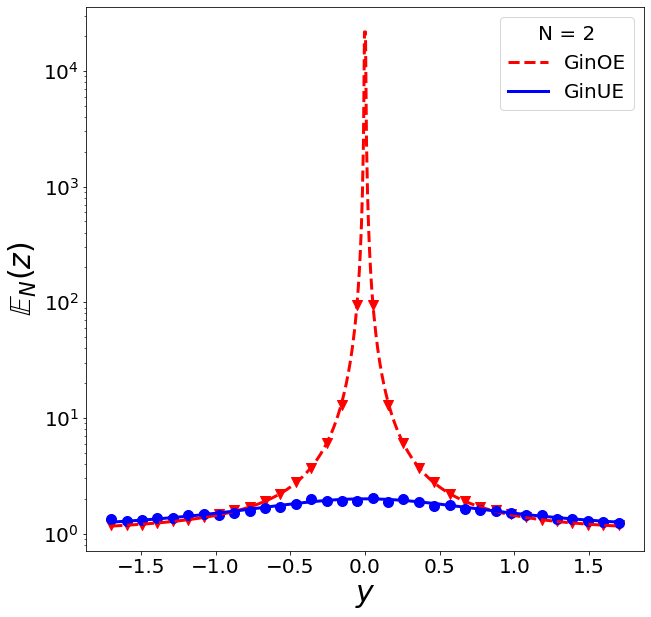}
    \hspace{0.1cm}
    \includegraphics[scale = 0.215]{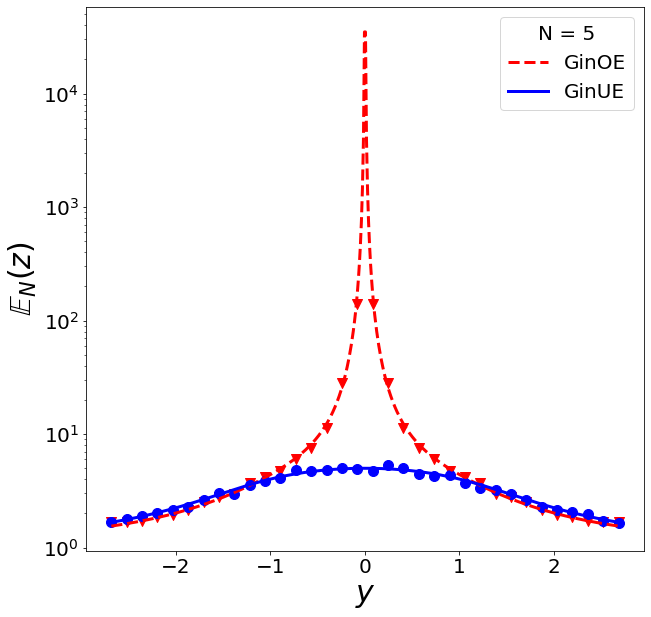}
    \hspace{0.1cm}
    \includegraphics[scale = 0.215]{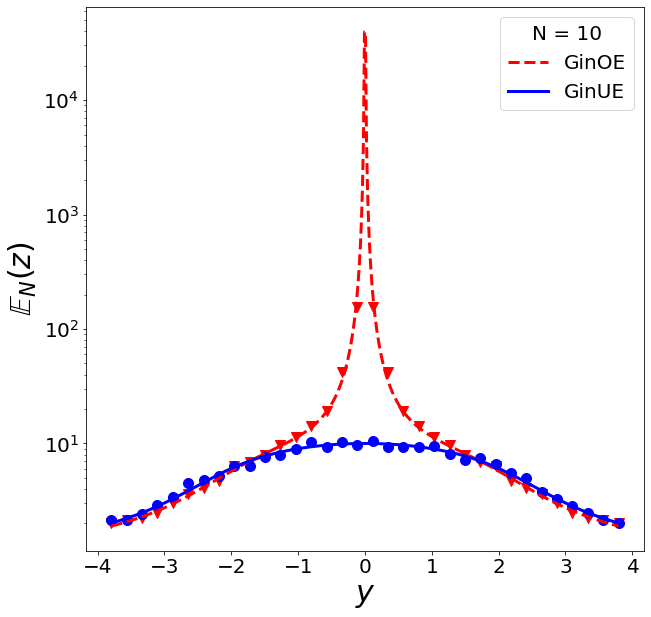}
    
    \caption{\small Mean conditional self-overlap $\mathbb{E}_N(z)$ associated with purely imaginary eigenvalues in the GinOE and GinUE, for three different values of $N$. Theoretical predictions are red dashed lines (GinOE) and blue solid lines (GinUE). Triangular (circular) markers represent numerical results for the GinOE (GinUE). The mean self-overlap is measured by averaging self-overlaps of the closest $O(10^3)$ eigenvalues with respect to the chosen value of $y$.  Numerically averaged values are taken from a data set containing $O(10^8)$ samples of the self-overlap. 
    }\label{fig:finite_N}
\end{figure}

\noindent
Next, we provide results for the large $N$ asymptotic behaviour of the mean self-overlap in the bulk, at the spectral edge and in the depletion regime, as defined in the previous section.

\begin{cor}\label{cor:GinOEbulkStrong}
    For a complex eigenvalue $z = \sqrt{N}w$, where $w=x+iy$  with $\vert w \vert < 1$ while keeping  $|y| \gg N^{-1/2}$ as $N\to \infty$, the limiting scaled mean self-overlap in the bulk is given by
    \be\label{Eq:GinOEbulkStrongRes}
        \mathcal{O}_{\textup{bulk}}^{\textup{(GinOE,c)}}(w) \equiv \lim_{N\rightarrow \infty} \frac{1}{N} \ \mathcal{O}_N^{\textup{(GinOE,c)}}\left( \sqrt{N}w \right) 
        = \frac{1}{\pi} \left( 1 - \vert w \vert^2 \right) \ \Theta\left[ 1 - \vert w \vert^2 \right] \ ,
    \ee
    where $\Theta\left[ a \right]$ is the Heaviside function, which is equal to one, when $a>0$ and zero otherwise. 
\end{cor}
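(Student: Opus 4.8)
The plan is to feed the closed finite-$N$ formula \eqref{Eq:MainResGinOE} into a termwise Laplace/Stirling analysis under the substitution $z=\sqrt N w$. The key preliminary observation is that the bracketed factor of \eqref{Eq:MainResGinOE} coincides, term by term, with $\pi$ times the GinUE overlap \eqref{Eq:Overlap_finiteN_GinUE} evaluated at matrix size $N-1$, so that
\be
\mathcal{O}^{(\GinOE,c)}_N(z) \;=\; \Big(1 + \sqrt{\tfrac{\pi}{2}}\,\frac{e^{2y^2}}{2|y|}\,\erfc\big(\sqrt{2}\,|y|\big)\Big)\,\mathcal{O}^{(\GinUE,c)}_{N-1}(z), \qquad y=\IM z.
\ee
Dividing by $N$, the bulk limit therefore factorises into (i) the limit of the $y$-dependent prefactor, and (ii) the bulk limit of $\tfrac1N\,\mathcal{O}^{(\GinUE,c)}_{N-1}(\sqrt N w)$, which is exactly the classical Chalker--Mehlig computation already recalled in the text.

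For (i), note that in the bulk scaling $y=\IM z=\sqrt N\,\IM w$, so the hypothesis $|\IM w|\gg N^{-1/2}$ is precisely what guarantees $|y|\to\infty$. Inserting the large-argument expansion $\erfc(t)=\tfrac{e^{-t^2}}{t\sqrt\pi}\big(1+O(t^{-2})\big)$ with $t=\sqrt2\,|y|$, the prefactor collapses to $1+\tfrac{1}{4y^2}\big(1+o(1)\big)\to 1$. This is the step that pins down the bulk hypothesis: at the threshold $|y|\sim N^{-1/2}$, i.e.\ $\IM z=O(1)$, the $\erfc$ term no longer trivialises and one is instead in the depletion regime of Corollary~\ref{cor:GinOEdepletionStrong}.

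For (ii), with $|z|^2=N|w|^2$ I would use the probabilistic reading $\tfrac{\Gamma(N-1,|z|^2)}{(N-2)!}=e^{-N|w|^2}\sum_{k=0}^{N-2}\tfrac{(N|w|^2)^k}{k!}=\mathbb{P}\big(\mathrm{Poisson}(N|w|^2)\le N-2\big)$. For $|w|^2<1$ the Poisson mean is eventually below $N-2$, so by the law of large numbers (equivalently a one-line Chernoff bound) this probability tends to $1$, whence $\tfrac1N\,\tfrac{\Gamma(N-1,|z|^2)}{(N-2)!}\,(N-1-|z|^2)\to 1-|w|^2$. The remaining ``boundary'' contribution $\tfrac1N\,\tfrac{|z|^{2(N-1)}}{(N-2)!}e^{-|z|^2}$ is controlled by Stirling's formula: it equals $\tfrac{|w|^2}{\sqrt{2\pi N}}\exp\!\big[-N\big(|w|^2-1-\ln|w|^2\big)+O(1)\big]$, and since $t-1-\ln t>0$ for $t=|w|^2\ne 1$ this decays exponentially. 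Combining the two pieces gives $\tfrac1N\,\mathcal{O}^{(\GinUE,c)}_{N-1}(\sqrt N w)\to\tfrac1\pi(1-|w|^2)$, and multiplying by the limit $1$ of the prefactor yields \eqref{Eq:GinOEbulkStrongRes}. Running the identical two estimates with $|w|^2>1$ shows that both the incomplete-Gamma ratio and the boundary term vanish, which supplies the Heaviside factor.

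I do not anticipate a genuine obstacle: the whole argument is a routine Laplace/Stirling analysis built on a formula already in hand. The only points requiring a little care are making the $\erfc$ expansion quantitative enough to see exactly why $|y|\gg N^{-1/2}$ is the correct hypothesis, and confirming that the exponentially small boundary term is indeed negligible after the division by $N$ — both of which are entirely standard.
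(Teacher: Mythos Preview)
Your proposal is correct and follows essentially the same route as the paper: both start from the finite-$N$ formula \eqref{Eq:MainResGinOE}, kill the $\erfc$-prefactor via its large-argument asymptotics under the hypothesis $|y|\gg N^{-1/2}$, send the incomplete-$\Gamma$ ratio to a step function, and discard the boundary term by Stirling. Your packaging differs only cosmetically---you make the factorisation through $\mathcal{O}^{(\GinUE,c)}_{N-1}$ explicit and phrase the $\Gamma(N-1,N|w|^2)/(N-2)!$ limit probabilistically via a Poisson tail, whereas the paper wraps the same limit in the notation $\Theta^{(2)}_N(|w|^2)\to\Theta[1-|w|^2]$ of Remark~\ref{rem:erfcAsymptotic}---but the underlying analysis is identical.
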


\noindent
This is nothing else but the original Chalker-Mehlig asymptotic for the mean self-overlap in the spectral bulk of the GinUE. This relation has been tested in both the GinOE and GinUE using numerical simulation, with the results shown in Figure \ref{fig:bulk_Gin}. In the case of the GinOE, one can see that deep inside the bulk region the formula Eq. \eqref{Eq:GinOEbulkStrongRes} is accurate for $N \geq O(10^2)$, indicating that the theory is accurate as $N \to \infty$.

\begin{figure}[H]
    \centering
    \includegraphics[scale = 0.215]{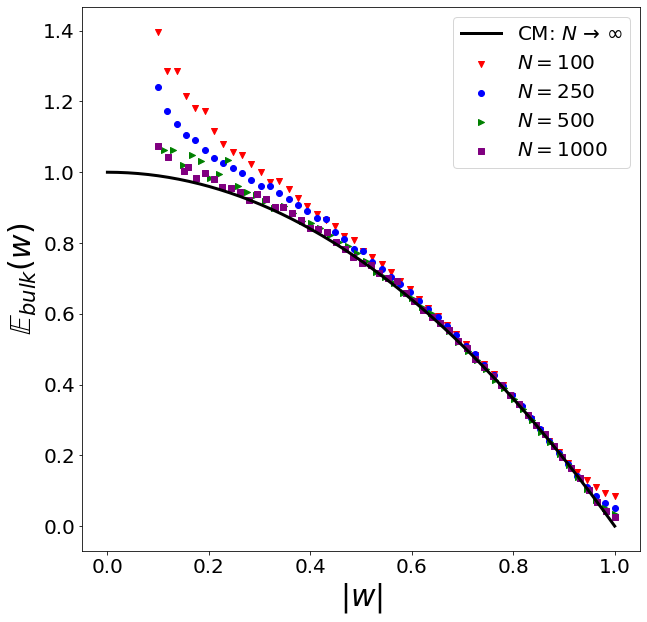}
    \hspace{2cm}
    \includegraphics[scale = 0.215]{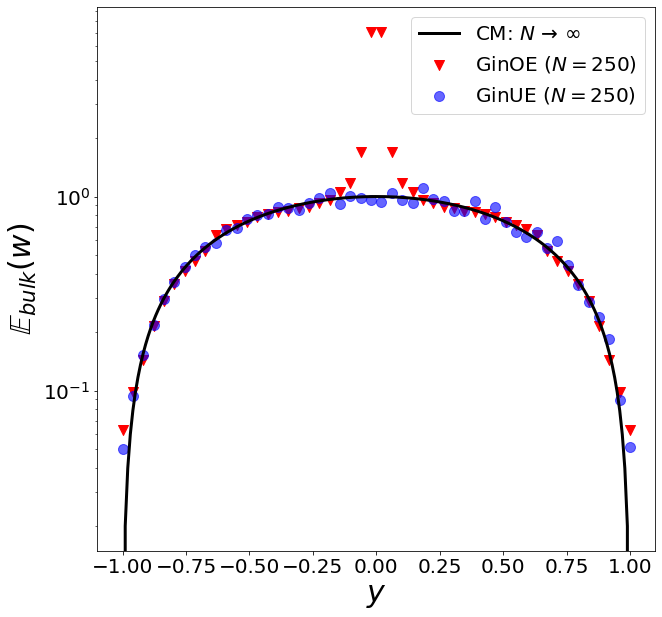}
    \caption{\small Numerical simulation of the mean conditional self-overlap, $\mathbb{E}_{\text{bulk}}(w)$, within the bulk of the GinOE and GinUE for large $N$ compared to the Chalker and Mehlig result (black lines). Left: $\mathbb{E}_{\text{bulk}}(w)$ in the bulk of GinOE as a function of $|w|$ for a range of different values of $N$ (coloured markers). We obtain results by considering the self-overlaps for eigenvalues within $\pm 1/\sqrt{N}$ of each value of $|z|$. Right: $\mathbb{E}_{\text{bulk}}(w)$ of eigenvectors of size $N=250$ associated with purely imaginary eigenvalues in the GinOE (red triangles) and GinUE (blue circles). Numerical results for each value of $N$ are taken from a sample with $O(10^6)$ GinOE and GinUE matrices. }     \label{fig:bulk_Gin}
\end{figure}

\noindent
One observes a crucial difference between the two ensembles when considering eigenvalues close to the real axis. This can be seen when measuring the mean self-overlap of eigenvectors associated with purely imaginary eigenvalues in the bulk for all $\text{Im}(z)\in [-\sqrt{N},\sqrt{N} ]$. In the case of the GinUE, the Chalker-Mehlig result holds for all $\text{Im}(z)$. In the GinOE however, the depletion of eigenvalues close to the real line leads to considerable deviations from the Chalker-Mehlig formula.  This effect will be accounted for by treating this region more carefully in Corollary \ref{cor:GinOEdepletionStrong}. However, before doing so, we consider the edge of the droplet and find the following Corollary, in agreement with its GinUE counterpart, given in Eq. \eqref{edgeasyGinUE}.

\begin{cor}\label{cor:GinOEedgeStrong}
    For a complex eigenvalue $z=\left( \sqrt{N} + \eta  \right)e^{i\theta}$ satisfying $\vert \sin \theta \vert \sim \mathcal{O}(1)$ and $\eta >0$ the limiting scaled mean self-overlap at the edge reads
    \be
    \label{Eq:GinOEedgeStrongRes}
        \mathcal{O}_{\textup{edge}}^{\textup{(GinOE,c)}}(\eta) \equiv \lim_{N\rightarrow \infty} \frac{1}{\sqrt{N}} \ \mathcal{O}_N^{\textup{(GinOE,c)}}\left( \left( \sqrt{N} + \eta  \right)e^{i\theta} \right)
        = \frac{1}{\pi} \left( \frac{1}{\sqrt{2\pi}} \ e^{-2\eta^2}  - \eta \ \textup{erfc}\left( \sqrt{2} \ \eta \right) \right) \ .
    \ee
\end{cor}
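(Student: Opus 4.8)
\emph{Proof strategy.} The plan is to reduce the statement to the already-available GinUE edge asymptotics~\eqref{edgeasyGinUE}. The key algebraic observation is that the bracketed factor in Theorem~\ref{thm:MainRes} is, up to the overall $1/\pi$, exactly the GinUE finite-$N$ overlap~\eqref{Eq:Overlap_finiteN_GinUE} with $N$ replaced by $N-1$. Hence, writing $y=\IM z$,
\[
  \mathcal{O}^{(\GinOE,c)}_{N}(z) = P_N(y)\, \mathcal{O}^{(\GinUE,c)}_{N-1}(z), \qquad P_N(y) := 1 + \sqrt{\frac{\pi}{2}}\, e^{2y^2}\,\frac{1}{2|y|}\,\erfc\!\left(\sqrt{2}\,|y|\right).
\]
Thus it suffices to show $P_N(y)\to 1$ along the edge scaling and to transfer~\eqref{edgeasyGinUE} from matrix size $N$ to size $N-1$.

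\emph{The prefactor.} For $z=(\sqrt{N}+\eta)e^{i\theta}$ we have $y=(\sqrt{N}+\eta)\sin\theta$, so $|y|\to\infty$ precisely because $|\sin\theta|$ is bounded away from $0$; this is the role of the hypothesis $|\sin\theta|\sim\mathcal{O}(1)$, which keeps $z$ out of the depletion region near $\mathbb{R}$. Inserting the large-argument expansion $\erfc(u)=\frac{e^{-u^2}}{u\sqrt{\pi}}\left(1+O(u^{-2})\right)$ with $u=\sqrt{2}\,|y|$ gives $e^{2y^2}\erfc(\sqrt{2}\,|y|)=\frac{1}{\sqrt{2\pi}\,|y|}\left(1+O(|y|^{-2})\right)$, hence $P_N(y)=1+\frac{1}{4y^2}+O(|y|^{-4})=1+O(N^{-1})\to 1$.

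\emph{Transfer and conclusion.} Writing $|z|=\sqrt{N}+\eta=\sqrt{N-1}+\eta_N$ with $\eta_N=\eta+(\sqrt{N}-\sqrt{N-1})\to\eta$, the point $z$ lies at the GinUE edge scaling for matrices of size $N-1$ with edge coordinate $\eta_N$ (the phase $\theta$ is immaterial, the GinUE overlap being rotationally symmetric). Combining $P_N(y)\to 1$, $\sqrt{N-1}/\sqrt{N}\to 1$, and the convergence~\eqref{edgeasyGinUE} applied with $N\mapsto N-1$ and $\eta\mapsto\eta_N$ (which is locally uniform in the edge coordinate), one gets
\[
  \frac{1}{\sqrt{N}}\,\mathcal{O}^{(\GinOE,c)}_{N}(z) = P_N(y)\cdot\frac{\sqrt{N-1}}{\sqrt{N}}\cdot\frac{1}{\sqrt{N-1}}\,\mathcal{O}^{(\GinUE,c)}_{N-1}(z) \longrightarrow \frac{1}{\pi}\left(\frac{1}{\sqrt{2\pi}}\,e^{-2\eta^2}-\eta\,\erfc(\sqrt{2}\,\eta)\right),
\]
which is~\eqref{Eq:GinOEedgeStrongRes}.

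\emph{On the main difficulty.} If one instead wants a derivation independent of~\eqref{edgeasyGinUE}, the asymptotics of the bracket in~\eqref{Eq:MainResGinOE} must be produced directly: Stirling's formula for $(N-2)!$; the central-limit asymptotic $\Gamma(M,M+s\sqrt{M})/\Gamma(M)\to\frac{1}{2}\erfc(s/\sqrt{2})$ applied with $M=N-1$ and $s=(|z|^2-(N-1))/\sqrt{N-1}\to 2\eta$, together with $N-1-|z|^2=-2\sqrt{N}\,\eta+O(1)$; and a local limit theorem for the ``Poisson'' term, $|z|^{2(N-1)}e^{-|z|^2}/(N-2)!\sim\sqrt{N/(2\pi)}\,e^{-2\eta^2}$. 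The genuine obstacle is uniformity of the error terms: since the bracket is of order $\sqrt{N}$ and we divide by $\sqrt{N}$, one must check that the $O(1)$ remainder in $N-1-|z|^2$ and the relative errors in the two $\Gamma$-type factors stay $o(\sqrt{N})$ after all factors are multiplied out. This is the same bookkeeping already behind the edge density stated in Remark~\ref{rem:asymp_dens} and behind~\eqref{edgeasyGinUE}, so no genuinely new difficulty arises.
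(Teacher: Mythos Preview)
Your proof is correct. The factorization $\mathcal{O}^{(\GinOE,c)}_{N}(z)=P_N(y)\,\mathcal{O}^{(\GinUE,c)}_{N-1}(z)$ is exactly the observation the paper makes in the paragraph following Theorem~\ref{thm:MainRes}, and you exploit it cleanly: $P_N(y)\to 1$ via the large-argument expansion of $\erfc$, and the bracket is handled by invoking the already-stated GinUE edge result~\eqref{edgeasyGinUE} at size $N-1$ with the shifted edge coordinate $\eta_N\to\eta$.

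The paper's own proof is organized differently. Rather than citing~\eqref{edgeasyGinUE}, it computes the edge asymptotic of the bracket from scratch: the incomplete-Gamma asymptotic $\Gamma(N-1,N+2\eta\sqrt{N})/\Gamma(N-1)\to\tfrac{1}{2}\erfc(\sqrt{2}\eta)$ handles the first term, and a direct Stirling computation handles the second. In substance the two arguments coincide, since those are precisely the ingredients behind~\eqref{edgeasyGinUE} (the paper says so at the end of Appendix~\ref{AppA}). Your route is more modular and makes the GinOE/GinUE edge coincidence manifest at the level of the argument; the paper's route is more self-contained and sidesteps the small bookkeeping of transferring the edge coordinate from $N$ to $N-1$ (and the appeal to local uniformity of the limit in~\eqref{edgeasyGinUE}). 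Your closing paragraph sketching the direct approach is accurate and essentially matches what the paper does.
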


\noindent
We thus see that, away from the real axis, both the bulk asymptotic and the edge asymptotic of the mean diagonal overlap is shared between the GinOE and GinUE. Note that when approaching the boundary of the droplet, the mean self-overlap turns out to be parametrically weaker, which is reflected in rescaling with $1/\sqrt{N}$ instead of $1/N$ to obtain a non-trivial limit as $N\to \infty$. 

The result for the mean self-overlap at the spectral edge of the GinOE has been considered in Figure \ref{fig:edge_Gin}. One can see from this figure that as $N$ increases, the agreement between the theoretical and numerically observed mean self-overlaps becomes better.

\begin{figure}[h]
    \centering
    \includegraphics[scale = 0.215]{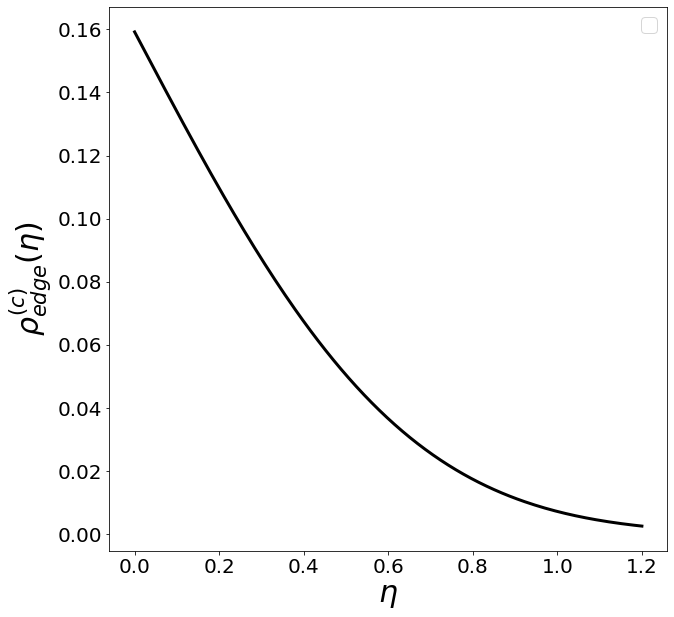}
    \hspace{2cm}
    \includegraphics[scale = 0.215]{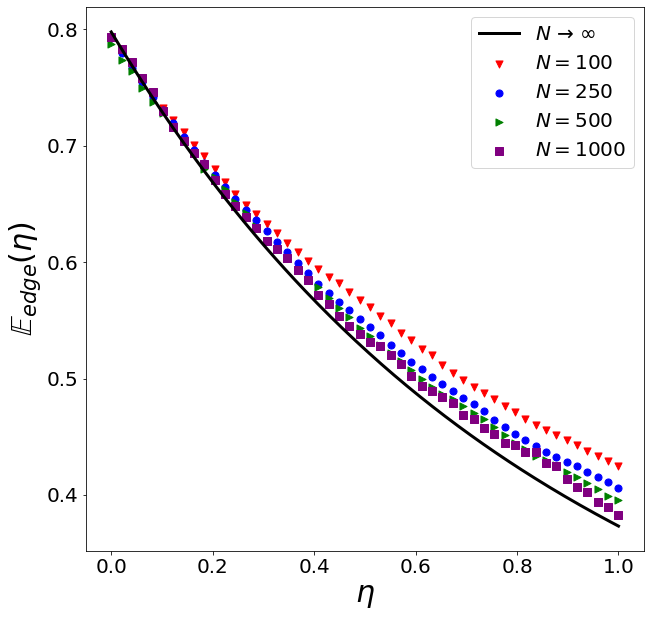}
    \caption{\small  Left: Limiting density of complex eigenvalues at the edge of the GinOE droplet. Right: Mean conditional self-overlap, $\mathbb{E}_{\text{edge}}(\eta)$, as a function of eigenvalue moduli, $|z| = \sqrt{N} + \eta$, for a range of different values of $N$. The limiting theoretical prediction of $\mathbb{E}_{\text{edge}}(\eta)$ for the GinOE (solid black line) is compared to numerical simulations for different values of $N$ (coloured markers). For each value of $\eta$, numerical averages are obtained by considering all eigenvalues with a modulus between $\pm 1/\sqrt{N}$ of $(\sqrt{N} + \eta)$. Samples of the self-overlap are taken from a data set generated from $O(10^6)$ GinOE matrices of each used value of $N$. }
    \label{fig:edge_Gin}
\end{figure}

\noindent
Finally, we present the asymptotic results for the depletion region of the droplet in the GinOE.

\begin{cor}\label{cor:GinOEdepletionStrong}
    For a complex eigenvalue $z=x+i\xi$, such that $\xi \sim \mathcal{O}(1)$ the limiting scaled mean self-overlap in the depleted region close to the origin, i.e. $x\sim \mathcal{O}(1)$, reads
    \be\label{Eq:GinOEdepletionStrongRes}
        \mathcal{O}^{\textup{(GinOE,c)}}_{\textup{depletion,origin}}(\xi) \equiv \lim_{N\rightarrow \infty} \frac{1}{N} \ \mathcal{O}_N^{\textup{(GinOE,c)}}\left( x+i\xi \right) 
        = \frac{1}{\pi} \left( 1+ \sqrt{\frac{\pi}{2}} \  \frac{1}{2|\xi|} \ e^{2\xi^2} \ \textup{erfc}\left( \sqrt{2} \ |\xi| \right) \right) \ .
    \ee
    Rescaling instead $x=\sqrt{N}\delta$, the limiting scaled mean self-overlap in the depleted region becomes
    \be\label{Eq:GinOEdepletionStrongRes2}
        \mathcal{O}^{\textup{(GinOE,c)}}_{\textup{depletion,strip}}(\delta,\xi) \equiv \lim_{N\rightarrow \infty} \frac{1}{N} \ \mathcal{O}_N^{\textup{(GinOE,c)}}\left( \sqrt{N} \delta +i\xi \right) 
        =  \mathcal{O}^{\textup{(GinOE,c)}}_{\textup{depletion,origin}}(\xi) \left( 1- \delta^2 \right) \Theta \left[ 1 - \delta^2 \right] \ .
    \ee
\end{cor}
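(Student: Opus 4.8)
The plan is to read off both limits directly from the finite-$N$ formula \eqref{Eq:MainResGinOE}. Write it as $\mathcal{O}^{(\GinOE,c)}_N(z) = \frac{1}{\pi}\, A(y)\, B_N(r)$ with $r = |z|^2 = x^2 + y^2$,
\[
A(y) = 1 + \sqrt{\tfrac{\pi}{2}}\, e^{2y^2}\, \frac{1}{2|y|}\, \erfc\!\left(\sqrt{2}\,|y|\right), \qquad B_N(r) = \frac{\Gamma(N-1,r)}{(N-2)!}\,(N-1-r) + \frac{r^{\,N-1}}{(N-2)!}\, e^{-r}\,.
\]
The key structural remark is that $A(y)$ carries no $N$-dependence, so in either scaling regime it simply survives as the factor $A(\xi)$, which is exactly the $\xi$-dependent bracket in \eqref{Eq:GinOEdepletionStrongRes}; consequently $\frac{1}{\pi}A(\xi) = \mathcal{O}^{(\GinOE,c)}_{\textup{depletion,origin}}(\xi)$. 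Everything therefore reduces to the large-$N$ behaviour of $B_N(r)$, and, up to the harmless shift $N \mapsto N-1$, $B_N$ is precisely the bracketed quantity appearing in the GinUE self-overlap \eqref{Eq:Overlap_finiteN_GinUE}, so its asymptotics are the same ones already invoked for Corollary \ref{cor:GinOEbulkStrong}.

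For \eqref{Eq:GinOEdepletionStrongRes} we take $z = x + i\xi$ with $x,\xi$ fixed, so $r = x^2 + \xi^2$ is a fixed $O(1)$ number. Then the normalised incomplete Gamma ratio $\Gamma(N-1,r)/\Gamma(N-1) \to 1$ (incomplete-to-complete with a fixed lower limit), $(N-1-r)/N \to 1$, and the Poisson-type term $r^{\,N-1}e^{-r}/(N-2)!$ tends to $0$ super-exponentially by Stirling; hence $B_N(r)/N \to 1$. Dividing \eqref{Eq:MainResGinOE} by $N$ then gives $\frac{1}{\pi}A(\xi)$, which is \eqref{Eq:GinOEdepletionStrongRes}.

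For \eqref{Eq:GinOEdepletionStrongRes2} we take $z = \sqrt{N}\delta + i\xi$, so $r = N\delta^2 + \xi^2 = N\bigl(\delta^2 + O(N^{-1})\bigr)$: the $\xi^2$ contribution to $r$ is a lower-order shift of the effective modulus and does not affect the leading order. One is thus back to the bulk scaling of $B_N$, and the analysis of Corollary \ref{cor:GinOEbulkStrong} applies verbatim: for $\delta^2 < 1$, $\Gamma(N-1,r)/\Gamma(N-1) \to 1$ and $(N-1-r)/N \to 1-\delta^2$, so $B_N(r)/N \to 1-\delta^2$; for $\delta^2 > 1$ both $\Gamma(N-1,r)/\Gamma(N-1)$ and $r^{\,N-1}e^{-r}/(N-2)!$ decay exponentially in $N$ while $|N-1-r|$ is at most $O(N)$, so $B_N(r)/N \to 0$. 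This yields the factor $(1-\delta^2)\Theta[1-\delta^2]$, and multiplying by $\frac{1}{\pi}A(\xi) = \mathcal{O}^{(\GinOE,c)}_{\textup{depletion,origin}}(\xi)$ produces \eqref{Eq:GinOEdepletionStrongRes2}; setting $\delta = 0$ recovers \eqref{Eq:GinOEdepletionStrongRes} as a consistency check.

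I expect the main (though routine) obstacle to be the uniform large-$N$ control of $\Gamma(N-1,r)/\Gamma(N-1)$ across the bulk scaling $r \sim N\delta^2$: one needs the decay rate $\exp[-N(\delta^2 - 1 - \ln\delta^2)]$ on both sides of the spectral edge $\delta^2 = 1$ (and the Ramanujan-type value $1/2$ at $\delta^2 = 1$ itself, which only matters at the jump of the Heaviside) in order to certify that all residual terms are genuinely $o(N)$ for each fixed $\delta$ with $\delta^2 \neq 1$. This is a standard Laplace/saddle-point estimate — indeed the very same one underlying Corollary \ref{cor:GinOEbulkStrong} — so the only genuinely new bookkeeping relative to the bulk case is the clean decoupling of the $\xi$-dependence into the exact, $N$-independent prefactor $A(\xi)$.
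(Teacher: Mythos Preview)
Your proposal is correct and follows essentially the same route as the paper: both start from the finite-$N$ formula \eqref{Eq:MainResGinOE}, observe that the $y$-dependent prefactor is $N$-independent and hence survives unchanged when $y=\xi\sim O(1)$, and then reduce the problem to the standard incomplete-Gamma asymptotics $\Gamma(N-1,r)/\Gamma(N-1)\to\Theta[1-r/N]$ (the paper's Remark~\ref{rem:erfcAsymptotic}) together with the vanishing of the remaining ``Poisson'' term. Your explicit factorisation $\mathcal{O}_N=\tfrac{1}{\pi}A(y)B_N(r)$ and the remark that $B_N$ coincides (up to $N\mapsto N-1$) with the GinUE bracket are a tidy way of packaging the argument, but the underlying computation is identical to the paper's.
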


\noindent
In order to demonstrate numerically the appropriate scale on which the depletion regime should be studied, the mean density of purely imaginary eigenvalues in the GinOE is plotted in Figure \ref{fig:depletion_Gin}. This illustrates a region of reduced eigenvalue density in the GinOE when $O(10^{-1}) < y < O(10^{1})$ before reaching an approximately constant value inside the bulk. Figure \ref{fig:depletion_Gin} also shows the mean conditional self-overlap, Eq. \eqref{conditional}, close to the origin and in a small strip close to the real line.

\begin{figure}[H]
    \centering
    \includegraphics[scale = 0.215]{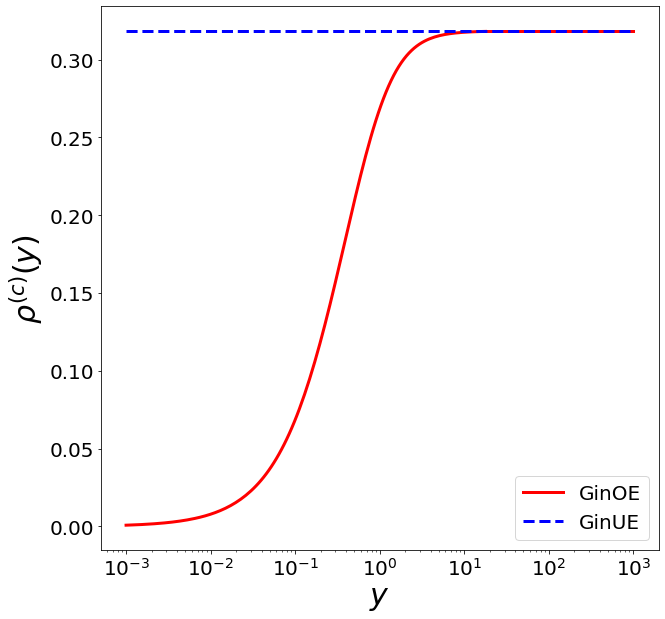}
    \hspace{0.05cm}
    \includegraphics[scale = 0.215]{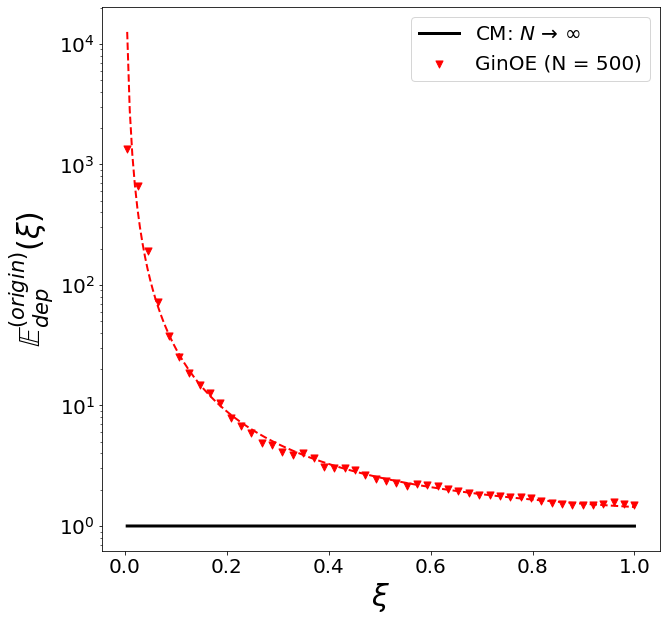}
    \hspace{0.05cm}
    \includegraphics[scale = 0.215]{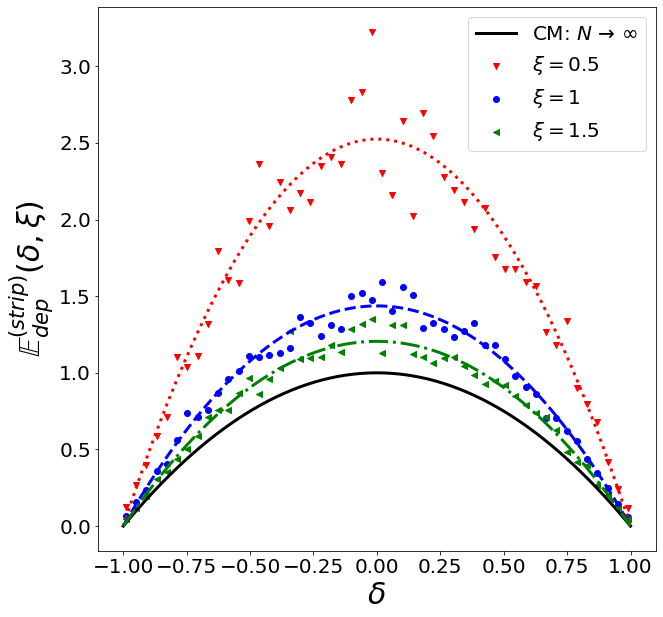}
    \caption{\small Large $N$ density of complex eigenvalues and mean conditional self-overlap $\mathbb{E}_{\text{dep}}(z)$ in the depletion regime of the GinOE. Left: Comparison of the limiting density of purely imaginary eigenvalues in the GinOE (red) and GinUE (blue). Centre: Theoretical prediction of $\mathbb{E}_{\text{dep}}^{\text{(origin)}}(\xi)$ close to the origin (red line), compared to numerical simulations (red triangles). Right: Mean conditional self-overlap $\mathbb{E}_{\text{dep}}^{\text{(strip)}}(\delta, \xi)$, with $z = \sqrt{N} \delta + i \xi$, for three different fixed values of $\xi$ in a strip close to the real line. In the centre and right hand plots the solid black line represents the Chalker and Mehlig result. Theoretical predictions (coloured lines) are compared to numerical simulations (coloured markers) of self-overlaps generated from $O(10^7)$ GinOE matrices of size $N=500$. }
    \label{fig:depletion_Gin}
\end{figure}

\noindent
When considering the mean self-overlap in this region, we start by considering eigenvalues close to the origin, i.e $z = x + i \xi$ with $x$ and $\xi \sim O(1)$. Here, one can see that Eq. \eqref{Eq:GinOEdepletionStrongRes} is independent of $x$ and depends purely on the imaginary component $\xi$. On the other hand, when considering the mean self-overlap along a rectangular strip close to the real line, i.e. taking $x \to \sqrt{N} \delta$, as in Eq. \eqref{Eq:GinOEdepletionStrongRes2}, there is now an explicit dependence on $\delta$. Essentially, one can interpret this expression as a scaled version of the Chalker-Mehlig result, that describes the increased mean self-overlap in the region of eigenvalue depletion close to the real line. Note that as $\xi$ becomes comparable to $\sqrt{N}$, i.e. the eigenvalue is inside the bulk, $\mathcal{O}^{\textup{(GinOE,c)}}_{\textup{depletion,origin}} (z) \to 1/\pi$ and we, unsurprisingly, recover the Chalker-Mehlig result.


\subsection{Numerical Simulations of the distribution of the diagonal overlap and discussion of open questions}
\label{subsec:Discussion}

We have already seen that, in the limit of large $N$, the mean self-overlap of eigenvectors in the spectral bulk and at the edge is the same for both the GinOE and GinUE. However, despite these similarities, there is a discernible difference in behaviour of the mean self-overlap in these two ensembles due to the existence of the depletion regime in the GinOE. It is natural to expect that a similar picture should hold not only for the first moment, but the whole distribution of the diagonal overlaps. As we do not yet have the analytic expression for such a distribution, $\mathcal{P}(t,z)$ where $t = \mathcal{O}_{nn} - 1$, for the GinOE in the complex plane, we proceed with briefly reviewing the results for complex eigenvalues of the GinUE and real eigenvalues of the GinOE, following the work \cite{FyodorovCMP}. The equation for the limiting JPDF of the eigenvector self-overlap in the bulk  of the GinUE reads:
\begin{align}
    \mathcal{P}^{\text{(GinUE,c)}}_{\text{bulk}} \left(s ,w \right) 
    = \frac{(1 - |w|^2)^2}{\pi s^3} e^{- \frac{1 - |w|^2}{s}} \Theta[ 1 - |w|^2] \ ,
    \label{eq:jpdf_GinOEc_bulk}
\end{align}
with $s = t/N$ and $z = \sqrt{N}w$, whereas the distribution at the spectral edge is given by
\begingroup
\allowdisplaybreaks
\begin{align}
    \mathcal{P}^{\text{(GinUE,c)}}_{\text{edge}} \left(\sigma , \eta \right)  
    =& \ \frac{1}{2\pi \sigma^5} e^{-\frac{\Delta^2}{2 \sigma^2}} \bigg\{  \frac{e^{-2\delta^2}}{\pi} \left( 2 \sigma^2 - \Delta \right) - \frac{1}{\sqrt{2\pi}} \left( 4 \delta \sigma^2 - \Delta(2 \delta + \sigma) \right) \erfc\left( \sqrt{2} \delta \right) \label{eq:jpdf_GinOEc_edge} \\
    & + \frac{e^{2 \delta^2}}{2} \left( \Delta^2 - \sigma^2 \right) \erfc^2\left( \sqrt{2} \delta \right) \bigg\} \nonumber \ ,
\end{align}
\endgroup
where $\sigma = t/ \sqrt{N}$ and $\Delta = 1 - 2 \sigma \eta$. In \cite{FyodorovCMP} one also finds an expression for the density of the self-overlap of GinOE eigenvectors associated with purely real eigenvalues in the bulk, $z=\sqrt{N}x$, which is given by
\begin{align}
    \mathcal{P}^{\text{(GinOE,r)}}_{\text{bulk}} \left(s, x \right) 
    = \frac{(1 - x^2)}{2 \sqrt{2\pi} } \frac{e^{- \frac{1 - x^2}{2s}}}{s^2} \Theta[ 1 - x^2] \ .
    \label{eq:jpdf_GinOEr_bulk}
\end{align}
\noindent
To compare with numerical simulations, the above distributions must also be normalised with respect to the mean spectral density. We denote the normalised distribution as $\widetilde{\mathcal{P}}(t,z) = \mathcal{P}(t,z)/\rho(z)$.

In Figure \ref{fig:distribution}, we compare distributions of the self-overlap of eigenvectors in the GinOE and GinUE. This is done in three different ways. Firstly, we consider the theoretical distributions in the complex bulk of the GinUE and real bulk of the GinOE (at $x=0$), in comparison to a numerically observed distribution in the depletion regime of the GinOE. We also compare the distribution of eigenvector self-overlaps in the complex bulk of the GinOE to the theoretical limiting distribution of self-overlaps in the bulk of the GinOE. Finally, we make a comparison between the distribution of the self-overlap at the edge of the GinUE and a numerically measured distribution at the edge of the GinOE. \\ 

\begin{figure}[h]
    \centering
    \includegraphics[scale = 0.215]{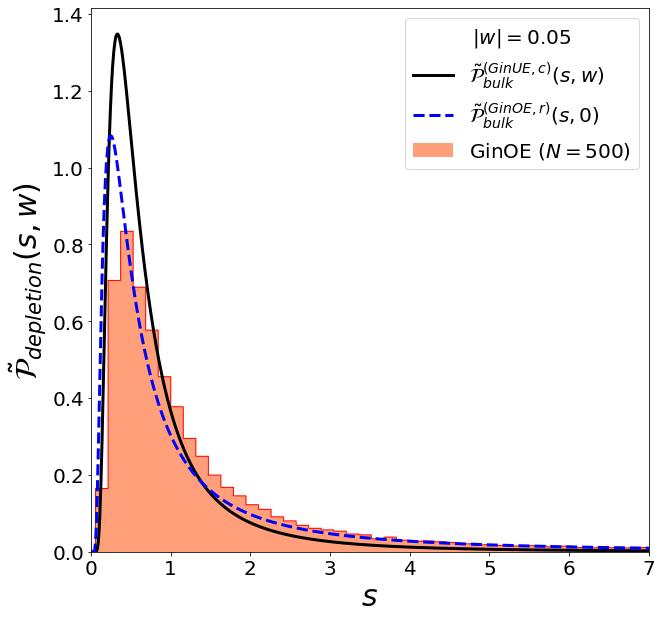}
    \hspace{0.1cm}
    \includegraphics[scale = 0.215]{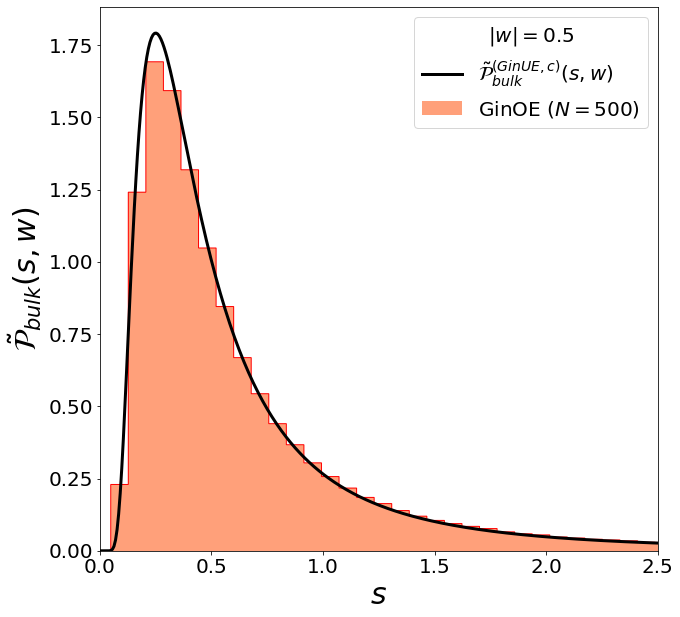}
    \hspace{0.1cm}
    \includegraphics[scale = 0.215]{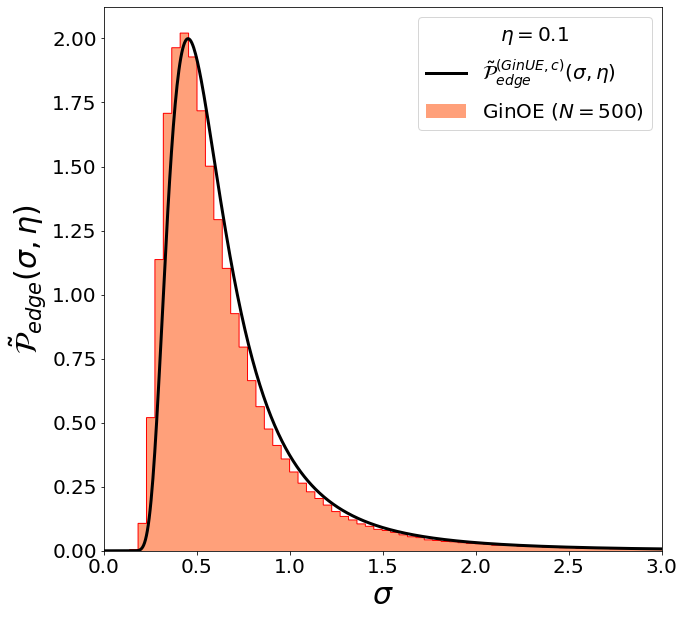}
    \caption{\small Numerical simulation of the distribution of the eigenvector self-overlaps in the GinOE compared to theoretical distributions for the GinOE and GinUE. In each plot the solid black line is the limiting distribution of the self-overlap in a region of the GinUE and the red shaded area shows a normalised histogram of self-overlaps observed in numerical simulation of $O(10^7)$ GinOE matrices of size $N=500$. Left: Depletion regime of the GinOE compared to the bulk of the GinUE and bulk real eigenvalues in the GinOE (blue line). Centre: Bulk of the GinOE compared to  the bulk of the GinUE. Right: Edge region of the GinOE compared to the edge of the GinUE. }
    \label{fig:distribution}
\end{figure}

\noindent
The data indeed seems to confirm that the GinUE distribution accurately describes the corresponding quantities for the GinOE away from the real axis.  In the depletion regime around the real axis the GinOE is not described by the limiting distribution associated with either complex eigenvalues in the bulk of the GinUE, Eq. \eqref{eq:jpdf_GinOEc_bulk}, or eigenvalues in the real bulk of the GinOE, Eq. \eqref{eq:jpdf_GinOEr_bulk}. It seems that the tails of the distribution of self-overlaps in the depletion regime is described better by Eq. \eqref{eq:jpdf_GinOEr_bulk} than Eq. \eqref{eq:jpdf_GinOEc_bulk}, indicating that the corresponding distribution may have a $1/s^{2}$ tail (at least as an intermediate asymptotics) as opposed to $1/s^{3}$. Next, considering the edge region it is apparent that there is a small discrepancy between theory and numerical simulations, however we attribute this to finite $N$ effects and expect this deviation to approach zero as $N \to \infty$.\\

\noindent
Rigorously proving equivalence of the distribution of the GinOE self-overlap for asymptotics within the bulk and at the edge for the GinOE remains an outstanding challenge. Characterizing the distribution of eigenvector diagonal overlaps in the depletion regime remains a completely open question. Another interesting extension of this work would be to study the off-diagonal averaged overlap, see Eq. \eqref{ChalkerMehligOverlap}, which is not yet known for the GinOE, neither for complex nor for real eigenvalues.\\

\noindent
In a separate paper, the results of this work will be extended to  the mean self-overlap in the complex plane of the real elliptic Ginibre ensemble, interpolating between GinOE and GOE.
In particular, this should allow us to study the depletion region in the new non-trivial scaling regime of weak non-Hermiticity \cite{FKS97a,FKS97b,FKS98}.


\section{Derivation of Main Results}\label{sec:MainResDerivation}

\noindent
Our strategy for the proof of the main results outlined in the previous section amounts to using the (incomplete) Schur decomposition with respect to pairs of complex eigenvalues of a real random matrix $G$. In this way we reduce the problem of evaluating the mean self-overlap in the complex plane to calculating the expectation value of a certain determinant, which is eventually implemented using Grassmann integration. In the process, we adapt techniques used in similar circumstances in \cite{FyodorovCMP} and \cite{FT}, which will help to prove the asymptotic results in Section \ref{sec:AsymptoticAnalysis}.


\subsection{Eigenvectors of real matrices via incomplete Schur decomposition}
\label{subsec:inSchur}

\noindent
Let $z=x+iy$ (with $y\neq 0$) be a complex eigenvalue of the $N\times N$ real, non-Hermitian matrix $G$. The associated left and right eigenvectors are denoted by $\mathbf{x}_L^\dagger$ and $\mathbf{x}_R$ respectively. Following the influential paper by Edelman \cite{Edelman}, we employ the \emph{incomplete Schur decomposition} of $G$ with respect to $z$, which is
\be\label{incomplSchur}
    G = Q \widetilde{G} Q^T \quad \text{with} \quad \widetilde{G} = \left(
    \begin{matrix}
    \begin{matrix}
    x & b \\
    -c & x \\
    \end{matrix} & W \\
    0 & G_2 \\
    \end{matrix}
    \right)\text{,} \ bc >0 \text{,}  \ b \geq c \text{,} \ y = \sqrt{bc}\text{,}
\ee
where real symmetric $Q$ is a so-called Householder reflection matrix  
such that $Q^2 = \eins_N$. The block matrix $W$ is real and has size $2\times (N-2)$, i.e. $W= \left(\mathbf{w}_1 \ \text{,} \ \mathbf{w}_2 \right)^T$, where $\mathbf{w}_1$, $\mathbf{w}_2$ are vectors with $N-2$ real i.i.d. entries following standard normal distributions. The matrix $G_2$ in this decomposition is essentially an $(N-2) \times (N-2)$ dimensional real Ginibre matrix. 
 The Jacobian of the transformation is presented in \cite{Edelman} and the integration measure changes according to
\be\label{JacSchur}
\begin{split}
    dG &= 2(b-c) \ \det \left[ (x\eins_{N-2} - G_2)^2 +y^2 \eins_{N-2} \right] \  dx \ db \ dc \ dW \ dG_2 \ dS \ ,
\end{split}
\ee
where $dS$ denotes the volume element of the Stiefel manifold originating from the Householder reflection matrix $Q$. After simple manipulations we find the probability measure, defined in terms of the new variables, reads:
\be\label{JPDFz}
\begin{split}
    P_{\text{GinOE}}(G)dG =& \ C_N^\prime \det \left[ (x\eins_{N-2} - G_2)^2 +y^2 \eins_{N-2} \right] e^{-\frac{1}{2}\Tr G_2 G_2^T}\\
    & \ (b-c) e^{-\frac{1}{2}\left( 2x^2 +b^2 +c^2 + \Tr WW^T \right)} dx \ db \ dc \ dW \ dG_2 \ ,
\end{split}
\ee
where the constant $C_{N}^\prime$ now reads
\be\label{Eq:NEWconsts}
\begin{split}
    C_{N}^\prime 
    = 2\ \frac{(2\pi)^{-\frac{1}{2}(N-1)^2}}{\sqrt{2\pi}\Gamma(N-1)} \ .
\end{split}
\ee
The next step is to determine the left and right eigenvectors in terms of the Schur decomposition variables, which is done adapting the method of \cite{FyodorovCMP}, where the case of real GinOE eigenvalues was considered. For $z=x+iy$, the eigenvalue problems read
\be\label{EigEq}
    G\mathbf{x}_R = z \mathbf{x}_R \quad \text{and} \quad \mathbf{x}_L^\dagger G = z \mathbf{x}_L^\dagger \ .
\ee
Applying the incomplete Schur decomposition from Eq. \eqref{incomplSchur}, we introduce $\widetilde{\mathbf{x}}_L^\dagger \equiv \mathbf{x}_L^\dagger Q$ and $\widetilde{\mathbf{x}}_R \equiv Q \mathbf{x}_R$ and see that the eigenvalue problems for $\widetilde{G}$ can be rewritten as $\widetilde{\mathbf{x}}_L^\dagger \widetilde{G} =z\widetilde{\mathbf{x}}_L^\dagger$
and $\widetilde{G}\widetilde{\mathbf{x}}_R= z \widetilde{\mathbf{x}}_R$.\\ 

\noindent
The left-right diagonal overlap, corresponding to the eigenvalue $z$ is obviously invariant under the incomplete Schur decomposition, i.e.
\be\label{invOverlap}
    \mathcal{O}_z = \left( \mathbf{x}_L^\dagger \mathbf{x}_L \right) \left( \mathbf{x}_R^\dagger \mathbf{x}_R \right)
    =\left(\widetilde{\mathbf{x}}_L^\dagger \widetilde{\mathbf{x}}_L \right) \left( \widetilde{\mathbf{x}}_R^\dagger \widetilde{\mathbf{x}}_R \right) \ ,
\ee
hence we can continue the calculation of the mean self-overlap using $\widetilde{\mathbf{x}}_L^\dagger$, $\widetilde{\mathbf{x}}_R$ and $\widetilde{G}$ instead of $\mathbf{x}_L^\dagger$, $\mathbf{x}_R$ and $G$.\\
\\
The incomplete Schur decomposition gives us the forms of $\widetilde{\mathbf{x}}_R$ and $\widetilde{\mathbf{x}}_L^\dagger$. By construction, $\widetilde{\mathbf{x}}_L^\dagger \widetilde{\mathbf{x}}_R = 1$ and it is easy to check that the eigenvectors must have the following structure:
\be
    \widetilde{\mathbf{x}}_R = \frac{1}{\sqrt{2}} \left( \begin{matrix}
    1 \\
    i\sqrt{\frac{c}{b}} \\
    \mathbf{0}_{N-2} \\
    \end{matrix}\right) \quad \text{and} \quad \widetilde{\mathbf{x}}_L^\dagger = \frac{1}{\sqrt{2}} \left( \begin{matrix}
    1 \\
    i\sqrt{\frac{b}{c}} \\
    \sqrt{2}\ \mathbf{b}_{N-2} \\
    \end{matrix}\right)^\dagger \ ,
\ee
with some $\mathbf{b}_{N-2}$ yet to be determined.   Substituting the above into the eigenvalue equation, $\widetilde{\mathbf{x}}_L^\dagger \widetilde{G} = z \widetilde{\mathbf{x}}_L^\dagger$, we find that
\be\label{bcond}
    z \mathbf{b}_{N-2}^\dagger \overset{!}{=} \frac{1}{\sqrt{2}} \left(\mathbf{w}_1^T - i \sqrt{\frac{b}{c}} \mathbf{w}_2^T \right) + \mathbf{b}_{N-2}^\dagger G_2 \quad \Leftrightarrow \quad \mathbf{b}_{N-2}^\dagger = \frac{1}{\sqrt{2}}  \left(\mathbf{w}_1^T - i \sqrt{\frac{b}{c}} \mathbf{w}_2^T \right) \left(z\eins_{N-2} - G_2 \right)^{-1} \ .
\ee
Using the above relation, the self-overlap $\mathcal{O}_z$ associated with a complex eigenvalue $z$ in the GinOE becomes
\be\label{overlapcomp}
\begin{split}
    \mathcal{O}_z &= \left( \widetilde{\mathbf{x}}_L^\dagger \widetilde{\mathbf{x}}_L \right) \left( \widetilde{\mathbf{x}}_R^\dagger \widetilde{\mathbf{x}}_R \right)
    =\frac{1}{4} \left( 2+ \frac{c^2+b^2}{bc} \right)+ \frac{1}{2} \left(1+\frac{c}{b} \right) \left( \mathbf{b}_{N-2}^\dagger \mathbf{b}_{N-2} \right)\ .
\end{split}
\ee

\noindent
To perform the ensemble average we follow \cite{Edelman} and change variables from $b$ and $c$ to $y=\sqrt{bc}$ and $\delta = b-c>0$, implying 
\be\label{Eq:bcqdchange}
\begin{split}
    db \ dc &= \frac{2y}{\sqrt{\delta^2 + 4y^2}} \ dy \ d\delta \ \text{,} \quad b^2 +c^2 = \delta^2 + 2y^2 \quad \text{and} \quad  \frac{b}{c} = \exp \left[\  2 \ \text{arcsinh}\left( \frac{\delta}{2y} \right) \right] \ \text{,}
\end{split}
\ee
so that the relevant JPDF, Eq. \eqref{JPDFz}, now takes the form
\be\label{Eq:JPDFs}
\begin{split}
     P_{\text{GinOE}}(G)dG =& \ C_{N}^\prime \det \left[ (x\eins_{N-2} - G_2)^2 +y^2 \eins_{N-2} \right] \exp \left[ -\frac{1}{2}\Tr \left( G_2 G_2^T + WW^T \right) \right] \\
    & \ \times \exp \left[ - x^2 \right] \ \exp \left[ - y^2 \right] \ \exp \left[  -\frac{1}{2} \delta^2  \right] \ \frac{2y\delta}{\sqrt{\delta^2 + 4y^2}} \ dx \ dy \ d\delta \ dW \ dG_2 \ .
\end{split}
\ee
Defining the matrix
\be\label{Eq:Bmat}
\begin{split}
    B &\equiv  \left(z\eins_{N-2} - G_2 \right)^\dagger \left(z\eins_{N-2} - G_2 \right) \ ,
\end{split}
\ee
we first express $\left( \mathbf{b}_{N-2}^\dagger \mathbf{b}_{N-2} \right)$ as 
\be\label{Eq:bbdaggerGinOE}
\begin{split}
      \left( \mathbf{b}_{N-2}^\dagger \mathbf{b}_{N-2} \right) =& \frac{1}{2} \mathbf{w}_1^T \ B^{-1} \ \mathbf{w}_1 +\frac{1}{2} \exp \left[\  2 \ \text{arcsinh}\left( \frac{\delta}{2y} \right) \right] \mathbf{w}_2^T \  B^{-1} \ \mathbf{w}_2 \\
      &+\frac{1}{2} \ i \ \exp \left[ \ \text{arcsinh}\left( \frac{\delta}{2y} \right) \right]  \bigg[ \mathbf{w}_1^T \  B^{-1} \ \mathbf{w}_2 - \mathbf{w}_2^T \  B^{-1} \ \mathbf{w}_1 \bigg] \ ,
\end{split}
\ee
which when substituted into Eq. \eqref{overlapcomp} implies
\be\label{overlapcomp2}
\begin{split}
    \mathcal{O}_z  = \widetilde{c}_1 + \widetilde{c}_2 \   \left( \mathbf{b}_{N-2}^\dagger \mathbf{b}_{N-2} \right) \ ,
\end{split}
\ee
where the constants read
\be\label{prefacs}
\begin{split}
    \widetilde{c}_1 &= \frac{1}{4} \left( 2+ \frac{\delta^2 + 2y^2}{y^2} \right)  \quad \text{and} \quad \widetilde{c}_2 =  \frac{1}{2} \left(1+ \exp \left[-2 \ \text{arcsinh}\left(\frac{\delta}{2y} \right) \right] \right) \ \text{.}
\end{split}
\ee
With these formulas in hand, we now proceed to proving our main theorem in the next section.


\subsection{Proof of Theorem \ref{thm:MainRes}}\label{subsec:ProoffiniteN}

\noindent
Using the equivalence of different indices $n$ under the ensemble averaging, the mean self-overlap for the GinOE can be written  as follows:
\be\label{Eq:Overlaps}
\begin{split}
    \mathcal{O}_N^{(\textup{GinOE},c)}(z) &= \bigg\langle \frac{1}{N}\sum_{n=1}^N \mathcal{O}_{nn} \ \delta(z-z_n) \bigg\rangle_{\text{GinOE},N} = \bigg\langle \mathcal{O}_{\widetilde{z}} \ \delta(z-\widetilde{z}) \bigg\rangle_{\textup{GinOE},N} \ \text{,}
\end{split}
\ee
where the two-dimensional $\delta$-function of complex argument, $z=x+iy$, should be interpreted as the product of two one-dimensional $\delta$-functions, containing its real and imaginary parts respectively, i.e. $\delta(z-\widetilde{z})=\delta(x-\widetilde{x})\delta(y-\widetilde{y})$. This allows us to use the incomplete Schur decomposition as presented in the previous section, in particular using results from Eq. \eqref{Eq:JPDFs} to Eq. \eqref{prefacs}. The average splits into integrations over $G_2$, $W$, $x$, $y$ and $\delta$ according to Eq. \eqref{Eq:JPDFs}.  The two $\delta$-functions make the integrations over $x$ and $y$ trivial, meaning that the next non-trivial task is to perform the integration with respect to the matrix $W$. We have $\Tr WW^T = \mathbf{w}_1^T \mathbf{w}_1 + \mathbf{w}_2^T \mathbf{w}_2$ and $dW = d\mathbf{w}_1 d\mathbf{w}_2$, which allows us to perform the Gaussian averages with respect to vectors $\mathbf{w}_1$ and $\mathbf{w}_2$ instead of $W$. We define the average with respect to $\mathbf{w}$ of an observable $\mathcal{A}(\mathbf{w})$ as
\be\label{Eq:WaverageDef}
    \bigg\langle \mathcal{A}(\mathbf{w}) \bigg\rangle_{\mathbf{w}} \equiv  \frac{1}{(2\pi)^{\frac{N-2}{2}}} \int d\mathbf{w} \exp \left[ -\frac{1}{2} \mathbf{w}^T \mathbf{w} \right] \mathcal{A}(\mathbf{w})\ \text{,}
\ee
 normalized in such a way that $\langle \eins \rangle_{\mathbf{w}} = 1$. 
The mean self-overlap for the GinOE then can be written as
\be\label{Eq:avOvlGinOEstep}
\begin{split}
    &\bigg\langle \mathcal{O}_{\widetilde{z}} \ \delta(z-\widetilde{z}) \ \bigg\rangle_{\textup{GinOE},N} = C_{N}^\prime \ (2\pi)^{N-2} \ \exp \left[ - \left( x^2 + y^2 \right) \right] \int d\delta \ \frac{2y\delta}{\sqrt{\delta^2 + 4y^2}} \ \exp \left[  -\frac{1}{2} \delta^2  \right] \\
    &\times  \int dG_2 \ \det \left[ (x\eins_{N-2} - G_2)^2 +y^2 \eins_{N-2} \right] \exp \left[ -\frac{1}{2}\Tr G_2 G_2^T \right] \ \bigg\langle \bigg\langle \widetilde{c}_1 + \widetilde{c}_2 \   \left( \mathbf{b}_{N-2}^\dagger \mathbf{b}_{N-2} \right) \bigg\rangle_{\mathbf{w}_2} \bigg\rangle_{\mathbf{w}_1} \ .
\end{split}
\ee
The computation of the double average over $\mathbf{w}_1$, $\mathbf{w}_2$ is performed in the next step, using  the following Lemma.

\begin{lem}\label{lem:Wav1}
Let $\mathbf{w}_1$, $\mathbf{w}_2$ be two vectors, each of length $N-2$, with independent real variables as entries and $X$ be an $N-2$ dimensional matrix. Then
\be\label{Eq:LemWaverage1}
\begin{split}
    &\bigg\langle \mathbf{w}^T \  X \ \mathbf{w} \bigg\rangle_{\mathbf{w}} = \Tr X \quad \textup{and} \quad \bigg\langle \bigg\langle    \mathbf{w}_1^T \  X \  \mathbf{w}_2  \bigg\rangle_{\mathbf{w}_2} \bigg\rangle_{\mathbf{w}_1} = \bigg\langle \bigg\langle     \mathbf{w}_2^T \  X \  \mathbf{w}_1 \bigg\rangle_{\mathbf{w}_2} \bigg\rangle_{\mathbf{w}_1} = 0 \ \text{,}
\end{split}
\ee
with the average defined in Eq. \eqref{Eq:WaverageDef} and $\mathbf{w} \in \{ \mathbf{w}_1, \mathbf{w_2} \}$. 
\end{lem}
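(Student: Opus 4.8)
The plan is to reduce both identities to the elementary moment facts of i.i.d.\ standard Gaussians. From the definition \eqref{Eq:WaverageDef}, the components $w_i$ of $\mathbf{w}$ are independent, mean-zero, unit-variance real Gaussians, so $\langle w_i \rangle_{\mathbf{w}} = 0$ (the integrand is odd in $w_i$ while the weight $\exp[-\tfrac12\mathbf{w}^T\mathbf{w}]$ is even) and $\langle w_i w_j \rangle_{\mathbf{w}} = \delta_{ij}$ (independence for $i\neq j$, unit variance for $i=j$). These two facts, together with linearity of the average and the observation that $X$ does not depend on either $\mathbf{w}_1$ or $\mathbf{w}_2$ (in the application $X=B^{-1}$ is a function of $G_2$ alone), are all that is needed.

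For the first identity I would expand the quadratic form in components, $\mathbf{w}^T X\mathbf{w} = \sum_{i,j=1}^{N-2} w_i X_{ij} w_j$, pull the average through the finite sum, and use $\langle w_i w_j\rangle_{\mathbf{w}}=\delta_{ij}$ to get $\langle \mathbf{w}^T X\mathbf{w}\rangle_{\mathbf{w}} = \sum_{i,j} X_{ij}\delta_{ij} = \sum_i X_{ii} = \Tr X$. No symmetry or reality of $X$ is used; equivalently one may replace $X$ by $(X+X^T)/2$, whose trace is unchanged, but the index computation sidesteps even that remark.

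For the second identity I would use that $\mathbf{w}_1$ and $\mathbf{w}_2$ are independent, so the iterated average factorizes over the two vectors. Writing $\mathbf{w}_1^T X\mathbf{w}_2 = \sum_{i,j}(w_1)_i X_{ij}(w_2)_j$ and performing the inner average over $\mathbf{w}_2$ first gives $\sum_{i,j}(w_1)_i X_{ij}\langle (w_2)_j\rangle_{\mathbf{w}_2} = 0$ already at that stage, since each $\langle(w_2)_j\rangle_{\mathbf{w}_2}$ vanishes by the oddness argument above; the subsequent average over $\mathbf{w}_1$ of $0$ is still $0$. The statement for $\langle\langle \mathbf{w}_2^T X\mathbf{w}_1\rangle_{\mathbf{w}_2}\rangle_{\mathbf{w}_1}$ follows by the identical computation with the two vectors interchanged. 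There is no genuine obstacle here: the only points to keep track of are that $X$ is truly independent of both $\mathbf{w}_1$ and $\mathbf{w}_2$ and hence may be treated as a constant under these averages, and that the normalization in \eqref{Eq:WaverageDef} is precisely the one enforcing $\langle\eins\rangle_{\mathbf{w}}=1$, so that no spurious constants survive.
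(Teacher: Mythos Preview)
Your proposal is correct and follows essentially the same approach as the paper, which simply states that the verification is straightforward by exploiting the Gaussian nature of the integrals involved. Your explicit componentwise computation using $\langle w_i\rangle_{\mathbf{w}}=0$ and $\langle w_i w_j\rangle_{\mathbf{w}}=\delta_{ij}$ is precisely what the paper has in mind.
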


\noindent
The verification of this Lemma is straightforward and can be done by exploiting the Gaussian nature of the integrals involved. This allows us to show the validity of the next Lemma.
\begin{lem}\label{lem:Wav2}
    The average over $\mathbf{w}_1$ and $\mathbf{w}_2$ of the object $\left( \mathbf{b}_{N-2}^\dagger \mathbf{b}_{N-2} \right)$ reads
    \be\label{Eq:LemWaverage3}
    \begin{split}
        \bigg\langle \bigg\langle \widetilde{c}_1 + \widetilde{c}_2 \   \left( \mathbf{b}_{N-2}^\dagger \mathbf{b}_{N-2} \right)  \bigg\rangle_{\mathbf{w}_2} \bigg\rangle_{\mathbf{w}_1}
        &= \frac{1}{2} \left(2 + \frac{\delta^2}{2y^2}  \right) \bigg[ 1 + \Tr \left[ B^{-1} \right] \bigg] \ .
    \end{split}
    \ee
\end{lem}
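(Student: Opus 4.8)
The plan is to reduce the double Gaussian average to Lemma~\ref{lem:Wav1} and then simplify the resulting Schur-variable prefactors. First I would insert the decomposition of $\mathbf{b}_{N-2}^\dagger \mathbf{b}_{N-2}$ from Eq.~\eqref{Eq:bbdaggerGinOE} into the bracket. It is a sum of two ``diagonal'' quadratic forms, $\tfrac12 \mathbf{w}_1^T B^{-1}\mathbf{w}_1$ and $\tfrac12\,(b/c)\,\mathbf{w}_2^T B^{-1}\mathbf{w}_2$ (recall $e^{2\,\text{arcsinh}(\delta/2y)} = b/c$), plus a cross term proportional to $\mathbf{w}_1^T B^{-1}\mathbf{w}_2 - \mathbf{w}_2^T B^{-1}\mathbf{w}_1$. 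Since $B^{-1}$ depends only on $G_2$ and not on the $\mathbf{w}_j$, the first identity in Lemma~\ref{lem:Wav1} replaces each diagonal term by $\Tr B^{-1}$ times its prefactor, while the second identity annihilates the cross term. Thus
\be
\bigg\langle \bigg\langle \mathbf{b}_{N-2}^\dagger \mathbf{b}_{N-2} \bigg\rangle_{\mathbf{w}_2} \bigg\rangle_{\mathbf{w}_1} = \tfrac12\Big(1 + \tfrac{b}{c}\Big)\,\Tr B^{-1} \ ,
\ee
and since $\widetilde c_1,\widetilde c_2$ are deterministic functions of $y$ and $\delta$, the full bracket equals $\widetilde c_1 + \widetilde c_2 \cdot \tfrac12(1 + b/c)\,\Tr B^{-1}$.

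The remaining task is the algebraic collapse of the coefficient of $\Tr B^{-1}$ back onto $\widetilde c_1$. Using $\widetilde c_2 = \tfrac12(1 + c/b)$ from Eq.~\eqref{prefacs} together with the elementary relations $bc = y^2$ and $b^2 + c^2 = \delta^2 + 2y^2$ recorded in Eq.~\eqref{Eq:bcqdchange}, I would compute
\be
\widetilde c_2 \cdot \tfrac12\Big(1 + \tfrac{b}{c}\Big) = \tfrac14\Big(1 + \tfrac{c}{b}\Big)\Big(1 + \tfrac{b}{c}\Big) = \tfrac14\Big(2 + \tfrac{b^2 + c^2}{bc}\Big) = \tfrac14\Big(2 + \tfrac{\delta^2 + 2y^2}{y^2}\Big) = \widetilde c_1 \ .
\ee
Hence the bracket is $\widetilde c_1\big(1 + \Tr B^{-1}\big)$, and inserting $\widetilde c_1 = \tfrac14\big(2 + (\delta^2 + 2y^2)/y^2\big) = \tfrac12\big(2 + \delta^2/2y^2\big)$ gives exactly Eq.~\eqref{Eq:LemWaverage3}.

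I do not anticipate a genuine obstacle here; the only point deserving care is the complex-conjugation bookkeeping, since $B = (z\eins_{N-2}-G_2)^\dagger(z\eins_{N-2}-G_2)$ is Hermitian but not real, so a priori the quadratic forms $\mathbf{w}_j^T B^{-1}\mathbf{w}_j$ and $\Tr B^{-1}$ could be complex. However, for a Hermitian matrix the symmetric part is its real part and the antisymmetric part is $i$ times its imaginary part, so for real $\mathbf{w}_j$ only $\RE B^{-1}$ survives in each quadratic form and $\Tr B^{-1} = \Tr \RE B^{-1}$ is real; Lemma~\ref{lem:Wav1} then applies verbatim with $X = B^{-1}$ and all intermediate quantities are well defined. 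In short, the ``hard part'' is merely recognizing that the prefactors $\widetilde c_1,\widetilde c_2$ are engineered so that $\widetilde c_2\cdot\tfrac12(1 + b/c)$ folds back into $\widetilde c_1$, after which the identity Eq.~\eqref{Eq:LemWaverage3} is immediate.
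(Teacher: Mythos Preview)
Your proposal is correct and follows essentially the same route as the paper: apply Lemma~\ref{lem:Wav1} to kill the cross terms and reduce the diagonal ones to $\Tr B^{-1}$, then collapse the prefactor onto $\widetilde c_1$. The only cosmetic difference is that the paper carries out the prefactor simplification via the hyperbolic identity $\cosh^2[\text{arcsinh}(\delta/2y)]=1+\delta^2/4y^2$, whereas you work directly with $b,c$ and the relations $bc=y^2$, $b^2+c^2=\delta^2+2y^2$; your observation that $\widetilde c_2\cdot\tfrac12(1+b/c)=\widetilde c_1$ is arguably cleaner, and your remark on why $\Tr B^{-1}$ is real is a welcome clarification the paper omits.
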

\begin{proof}
    The verification amounts to using the expression of $\left( \mathbf{b}_{N-2}^\dagger \mathbf{b}_{N-2} \right)$, in Eq. \eqref{Eq:bbdaggerGinOE}, in conjunction with Lemma \ref{lem:Wav1}, such that:  
    \be\label{avswsell}
    \begin{split}
        \bigg\langle \ \bigg\langle  \left( \mathbf{b}_{N-2}^\dagger \mathbf{b}_{N-2} \right) \bigg\rangle_{\mathbf{w}_2} \bigg\rangle_{\mathbf{w}_1} 
        &=\frac{1}{2} \bigg\langle \mathbf{w}_1^T \ B^{-1} \ \mathbf{w}_1 \bigg\rangle_{\mathbf{w}_1} +\frac{1}{2} \exp \left[\  2 \ \text{arcsinh}\left( \frac{\delta}{2y} \right) \right] \bigg\langle \mathbf{w}_2^T \  B^{-1} \ \mathbf{w}_2 \bigg\rangle_{\mathbf{w}_2}\\
        &= \frac{1}{2} \left( 1 + \exp \left[\  2 \ \text{arcsinh}\left( \frac{\delta}{2y} \right) \right] \right) \Tr B^{-1} \ .
    \end{split}
    \ee 
    Then exploiting the fact that $ \cosh^2 \left[\text{arcsinh}\left(\frac{\delta}{2y} \right) \right] = 1 + \frac{\delta^2}{4y^2}$,
     alongside the definition of $\Tilde{c}_2$ in Eq. \eqref{prefacs} this yields: 
    \be
    \begin{split}
         \frac{\widetilde{c}_2}{2} \left( 1 + \exp \left[\  2 \ \text{arcsinh}\left( \frac{\delta}{2y} \right) \right] \right) 
         = \frac{1}{2} \bigg[ 1 + \cosh \left[2 \ \text{arcsinh}\left(\frac{\delta}{2y} \right) \right] \bigg] =   1+\frac{\delta^2}{4y^2} .
    \end{split}
    \ee
    Finally, applying the definition of $\Tilde{c}_1$ from Eq. \eqref{prefacs}, leads to
    \be
    \begin{split}
        \bigg\langle \bigg\langle \widetilde{c}_1 + \widetilde{c}_2 \   \left( \mathbf{b}_{N-2}^\dagger \mathbf{b}_{N-2} \right)  \bigg\rangle_{\mathbf{w}_2} \bigg\rangle_{\mathbf{w}_1} &= \widetilde{c}_1+ \frac{1}{2} \Tr \left[ B^{-1} \right]  \left(2 + \frac{\delta^2}{2y^2}  \right) 
        = \frac{1}{2} \left(2 + \frac{\delta^2}{2y^2}  \right) \bigg[ 1 + \Tr \left[ B^{-1} \right] \bigg] \ ,
    \end{split}
    \ee
    which concludes the proof of this Lemma.
\end{proof}


\noindent
With Lemma \ref{lem:Wav2}, we are able to express the mean self-overlap at finite $N$ in such a way, that the remaining average over $G_2$ becomes tractable. Starting from Eq. \eqref{Eq:avOvlGinOEstep} and applying Lemma \ref{lem:Wav2} we obtain
\be\label{Eq:avOvlGinOEstep2} 
\begin{split}
    \bigg\langle \mathcal{O}_{\widetilde{z}} \ \delta(z-\widetilde{z}) \ \bigg\rangle_{\textup{GinOE},N} 
    &= C_{N}^\prime \ (2\pi)^{N-2} \ \exp \left[ - \left( x^2 + y^2 \right) \right] \int d\delta \ \frac{2y\delta}{\sqrt{\delta^2 + 4y^2}} \ \exp \left[  -\frac{1}{2} \delta^2  \right]  \frac{1}{2} \left(2 + \frac{\delta^2}{2y^2}  \right) \\
    &\times  \int dG_2 \ \det \left[ (x\eins_{N-2} - G_2)^2 +y^2 \eins_{N-2} \right] \exp \left[ -\frac{1}{2}\Tr G_2 G_2^T \right] \bigg[ 1 + \Tr \left[ B^{-1} \right] \bigg] \ .
\end{split}
\ee
The integrand, involving the matrix $G_2$, can be written in terms of the matrix $B$ from Eq. \eqref{Eq:Bmat}, noticing that
\be\label{detformulas}
\begin{split}
    \det B &= \det \left[ \left(z\eins_{N-2} - G_2 \right)^\dagger \left(z\eins_{N-2} - G_2 \right)  \right] = \det \left[ (x\eins_{N-2} - G_2)^2 +y^2 \eins_{N-2} \right] \\
    &= \det \left[ \begin{matrix}
    0 & i(z\eins_{N-2} - G_2)  \\
    i(\bar{z} \eins_{N-2} - G_2^T) & 0 \\
    \end{matrix}  \right] \ ,
\end{split}
\ee
as well as
\be\label{deriv}
    \frac{\partial}{\partial \mu} \det\left( \mu \eins_{N-2} + B \right) \bigg\vert_{\mu=0} = \det B \ \Tr B^{-1} \ .
\ee
In fact, it is more convenient to introduce a block-matrix $M$ via
\be\label{Eq:propMmat}
    M \equiv \left( \begin{matrix}
        \sqrt{\mu} \eins_{N} & i\left( z \eins_N - G \right) \\
        i\left( \bar{z} \eins_N - G^T \right) & \sqrt{\mu} \eins_N \\
    \end{matrix} \right) \ ,
\ee
where each block is of size $N\times N$ and $\mu$ is a real parameter. It is easy to see that
for $\mu=0$ we then have
\be
    \det M = \det B  \quad \text{and} \quad \frac{\partial}{\partial \mu} \det M \ \bigg\vert_{\mu=0} = \det B \ \Tr B^{-1} \ .
\ee
Now we define the averaging over $G_2$  as
\be\label{Eq:X2av}
    \bigg\langle \mathcal{A}(G_2) \bigg\rangle_{G_2} \equiv C_{N-2}^{-1} \int dG_2 \ \exp \left[-\frac{1}{2}\Tr \left( G_2 G_2^T  \right) \right] \ \mathcal{A}(G_2) \ ,
\ee
where the constant $C_{N-2}$ is given in Eq. \eqref{GinDistrib}  to ensure the correct normalization: $\langle \eins \rangle_{G_2} = 1$. The expression in Eq. \eqref{Eq:avOvlGinOEstep2} therefore becomes
\be\label{Eq:avOvlGinOEstep3}
\begin{split}
    \bigg\langle \mathcal{O}_{\widetilde{z}} \ \delta(z-\widetilde{z}) \ \bigg\rangle_{\textup{GinOE},N} 
    &= C_{N}^\prime \  (2\pi)^{N-2} \ C_{N-2} \ \exp \left[ - \left( x^2 + y^2 \right) \right] \int d\delta \ \frac{y\delta}{\sqrt{\delta^2 + 4y^2}} \ \exp \left[  -\frac{1}{2} \delta^2  \right]   \left(2 + \frac{\delta^2}{2y^2}  \right)\\
    &\times \bigg[ \ \bigg\langle \det M \bigg\rangle_{G_2} \ \bigg\vert_{\mu = 0} + \frac{\partial}{\partial \mu }\bigg\langle \det M  \bigg\rangle_{G_2} \ \bigg\vert_{\mu = 0} \ \bigg] \ .
\end{split}
\ee


\noindent
The analysis of the average of the determinant of the matrix $M$ with respect to the GinOE of size $N-2$ is  done using the standard representation of the determinant via Berezin integration over anti-commuting Grassmann variables \cite{Berezin}:
\be\label{detIdentGrass}
    \int D(\bm \Phi,\bm \chi) \exp \left[ - \bm \Phi^T M \bm \chi \right] = \det M \ \text{,}
\ee
where $D(\bm \Phi,\bm \chi) = d\bm \phi_1 d \bm \chi_1 d \bm \phi_2 d\bm \chi_2$ and $\bm \Phi^T = \left( \bm \phi_1, \bm \phi_2 \right)^T$, $\bm \chi^T = \left( \bm \chi_1, \bm \chi_2 \right)^T$.
The outcome is provided by the following Proposition.
\begin{prop}\label{prop:G2av}
    Let $M$ be the matrix defined in Eq. \eqref{Eq:propMmat}, 
     $z=x+iy$ is a complex number and $G$ is an $N\times N$ real Ginibre matrix. The average over the determinant of $M$, with respect to the real Ginibre matrix $G$, can be expressed in the following form:
    \begin{align}
        \label{Eq:detMresGinOEa}
        &\bigg\langle \det M \bigg\rangle_{G} = \frac{1}{\pi}  \int_0^{\infty} dr ~ r \ e^{-r^2} \int_0^{2\pi} d \theta \bigg[  \mu + |z|^2 + 2\sqrt{\mu} r \cos(\theta) + r^2 \bigg]^N 
         \\
        \label{Eq:detMresGinOEb} &=   \int_0^{\infty} dR  \ e^{-R}  \bigg[  \left(R + |z|^2\right)^2 + \mu^2 +2\mu (|z|^2-R)  \bigg]^{\frac{N}{2}} 
        \,P_N\left(\frac{\mu+R+|z|^2}{\sqrt{\left(R + |z|^2\right)^2 + \mu^2 +2\mu (|z|^2-R)}}\right) \ ,
    \end{align}
    where $P_N(t) $ is a Legendre polynomial defined via the identity \textup{\cite[8.913.3]{Grad}}
    \be \label{LegPolyint}
        P_N(t)=\frac{1}{\pi}\int_0^{\pi} d\theta \ \left(t+\sqrt{t^2-1}\cos{\theta}\right)^N \ .
    \ee
\end{prop}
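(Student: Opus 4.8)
The plan is to write $\det M$ as a Gaussian Berezin integral over the $2N$-component Grassmann vectors $\bm\Phi,\bm\chi$ via Eq. \eqref{detIdentGrass}, to carry out the Gaussian average over the real Ginibre matrix $G$ \emph{first} (before any integration over the Grassmann variables), and then to disentangle the resulting quartic Grassmann term with a single auxiliary complex Hubbard--Stratonovich variable $\sigma$. Concretely, I would first expand $-\bm\Phi^T M\bm\chi$ in the blocks $\bm\phi_1,\bm\phi_2,\bm\chi_1,\bm\chi_2$ and split it into a $G$-independent part, equal to $-\sqrt\mu\,(\bm\phi_1^T\bm\chi_1+\bm\phi_2^T\bm\chi_2) - iz\,\bm\phi_1^T\bm\chi_2 - i\bar z\,\bm\phi_2^T\bm\chi_1$, and a part linear in $G$, which can be cast as $i\,\Tr(GY)$ with $Y=\bm\phi_2\bm\chi_1^T-\bm\chi_2\bm\phi_1^T$ a combination of Grassmann dyads. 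Performing the elementary Gaussian integral over each i.i.d.\ entry $G_{ij}\sim\mathcal N(0,1)$ produces the factor $\exp[-\tfrac12\Tr(YY^T)]$. The key simplification is that, because $\bm\phi_a^T\bm\phi_a=\bm\chi_a^T\bm\chi_a=0$ by nilpotency, all ``cross-block'' contributions to $\Tr(YY^T)$ vanish and one is left with $\Tr(YY^T)=-2\,(\bm\phi_1^T\bm\chi_1)(\bm\phi_2^T\bm\chi_2)$. Writing $u=\bm\phi_1^T\bm\chi_1$ and $v=\bm\phi_2^T\bm\chi_2$, the $G$-average therefore contributes exactly $e^{uv}$.

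Next I would decouple this term by the identity $e^{uv}=\tfrac1\pi\int_{\mathbb C} d^2\sigma\,e^{-|\sigma|^2+\sigma u+\bar\sigma v}$, which holds termwise because $u,v$ are nilpotent and $\tfrac1\pi\int d^2\sigma\,e^{-|\sigma|^2}\sigma^a\bar\sigma^b=a!\,\delta_{ab}$. After this the remaining integral over $\bm\Phi,\bm\chi$ is purely Gaussian, $\int D(\bm\Phi,\bm\chi)\,e^{-\bm\Phi^T\widetilde M\bm\chi}=\det\widetilde M$, where $\widetilde M$ is the $2N\times2N$ matrix with the four commuting $N\times N$ blocks $(\sqrt\mu-\sigma)\eins_N,\ iz\,\eins_N,\ i\bar z\,\eins_N,\ (\sqrt\mu-\bar\sigma)\eins_N$. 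Hence $\det\widetilde M=\big[(\sqrt\mu-\sigma)(\sqrt\mu-\bar\sigma)+|z|^2\big]^N=\big[\mu+|z|^2-2\sqrt\mu\,\RE\sigma+|\sigma|^2\big]^N$. Reflecting $\sigma\mapsto-\sigma$ and passing to polar coordinates $\sigma=re^{i\theta}$, $d^2\sigma=r\,dr\,d\theta$, this is exactly Eq. \eqref{Eq:detMresGinOEa}.

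To pass to Eq. \eqref{Eq:detMresGinOEb} I would perform the angular integral. Using $\int_0^{2\pi}=2\int_0^\pi$ and factoring $\mu+|z|^2+r^2+2\sqrt\mu\,r\cos\theta=\sqrt{c^2-4d^2}\,\big(t+\sqrt{t^2-1}\cos\theta\big)$ with $c=\mu+|z|^2+r^2$, $d=\sqrt\mu\,r$ and $t=c/\sqrt{c^2-4d^2}$ — where one checks $c^2-4d^2=(r^2+|z|^2)^2+\mu^2+2\mu(|z|^2-r^2)\ge0$, so that $t\ge1$ and $\sqrt{t^2-1}=2d/\sqrt{c^2-4d^2}\ge0$ — the $\theta$-integral becomes $2\pi\,(c^2-4d^2)^{N/2}P_N(t)$ by the Legendre representation \eqref{LegPolyint}. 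Collecting the prefactor $\tfrac1\pi$ and substituting $R=r^2$ (so $r\,dr=\tfrac12\,dR$) reproduces Eq. \eqref{Eq:detMresGinOEb}.

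The step I expect to be the main obstacle is the Grassmann sign bookkeeping in the first paragraph: both the identification of $Y$ (each outer product such as $\bm\chi_2\bm\phi_1^T$ acquires a sign from anticommutation when one rewrites $\bm\phi^T G\bm\chi$ as a trace) and the reduction of $\Tr(YY^T)$ have to be done carefully, and the resulting sign of $uv$ must match the one dictated by the Hubbard--Stratonovich identity. A robust cross-check is to evaluate everything at $N=1$, where $\langle\det M\rangle_G$ is immediate: $\langle\det M\rangle_G=\mu+|z|^2+1$ for $N=1$, which indeed equals $\tfrac1\pi\int d^2\sigma\,e^{-|\sigma|^2}\big[\mu+|z|^2+|\sigma|^2+2\sqrt\mu\,\RE\sigma\big]$. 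The remaining points — the termwise validity of the $\sigma$-integral (no genuine convergence issue, since the Grassmann content truncates all series), positivity of $c^2-4d^2$, and the applicability of \eqref{LegPolyint} for $t\ge1$ — are routine.
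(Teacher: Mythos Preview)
Your proposal is correct and follows essentially the same route as the paper: Berezin representation, Gaussian average over $G$ producing the quartic $e^{(\bm\phi_1^T\bm\chi_1)(\bm\phi_2^T\bm\chi_2)}$, a complex Hubbard--Stratonovich decoupling, the resulting $2\times 2$-block determinant, and then polar coordinates plus the Laplace integral \eqref{LegPolyint} for $P_N$. The only cosmetic difference is that the paper packages the $G$-linear part as $-\Tr(GA)-\Tr(G^TB)$ with $A=i\bm\chi_2\otimes\bm\phi_1^T$, $B=i\bm\chi_1\otimes\bm\phi_2^T$ and quotes the identity \eqref{eq:ave_exp_Tr_GN} from \cite{FyodorovCMP}, whereas you combine both pieces into a single $Y$ and integrate the i.i.d.\ entries directly; the content is the same, and your $N=1$ check indeed pins down the overall sign.
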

\begin{proof}
Using the integration over vectors with $N$ anticommuting components each, $\bm \phi_1$, $\bm \phi_2$, $\bm \chi_1$, $\bm \chi_2$,  with entries $\chi_{1,i}$ $(i = 1,2, ... N)$ and so forth, we start with writing Eq. \eqref{detIdentGrass} in the explicit form for our particular choice:
\be
   \bigg\langle \det M \bigg\rangle_{G} =(-1)^N \expval{ \int D \bm \Phi D \bm \chi 
    \exp{ -  \left( \begin{matrix} \bm \phi_1^T & \bm \phi_2^T \end{matrix} \right)
    \left( \begin{matrix} \sqrt{\mu} \eins_N & i \left( z \eins_N -  G \right) \\ i \left( \bar{z} \eins_N -  G^T \right) & \sqrt{\mu} \eins_N
    \end{matrix} \right) \left( \begin{matrix} \bm \chi_1 \\ \bm \chi_2 \end{matrix} \right)
     }  }_{G} \ ,
\ee
which upon using $\bm \phi_1^T G \bm \chi_2 =- \Tr[G \bm \chi_2 \otimes \phi_1^T  ]$ can be written as
\begin{align}
    \bigg\langle \det M \bigg\rangle_{G} 
     = (-1)^N \int& D \bm \Phi D \bm \chi e^{ - \sqrt{\mu} \bm \phi_1^T \bm \chi_1  - \sqrt{\mu} \bm \phi_2^T \bm \chi_2 - i z \bm \phi_1^T \bm \chi_2 - i \bar{z} \bm \phi_2^T \bm \chi_1 }  \expval{ 
     e^{ - i \Tr[G  \bm \chi_2 \otimes \bm \phi_1^T ] - i \Tr[ G^T \bm \chi_1 \otimes \bm \phi_2^T] }  }_{G}  \ .
\end{align}
The expectation value in the above integrand can be evaluated using the following identity
\begin{equation}
    \expval{e^{-\Tr[G  \bm A] - \Tr[ G^T \bm B]}}_{G} = \exp{ \frac{1}{2} \Tr[\bm A^T \bm A] + \frac{1}{2} \Tr[\bm B^T \bm B] + \Tr[\bm A \bm B] } \ ,
    \label{eq:ave_exp_Tr_GN}
\end{equation}
which can be found in \cite[Eq. (3.15)]{FyodorovCMP}. In our case we have that $\bm A =  i\bm \chi_2 \otimes \bm \phi_1^T$ and $\bm B = i\bm \chi_1 \otimes \bm \phi_2^T$, so that
\begin{equation}
    \Tr[\bm A^T \bm A] = \Tr[\bm B^T \bm B] = 0 \quad \text{and} \quad \Tr[\bm A \bm B] = \left(\bm \phi_1^T \bm \chi_1 \right)\left(\bm \phi_2^T \bm \chi_2\right) \ ,
\end{equation}
yielding
\be
    \bigg\langle \det M \bigg\rangle_{G} =  (-1)^N
    \int D \bm \Phi D \bm \chi \exp{ - \sqrt{\mu} \bm \phi_1^T \bm \chi_1  - \sqrt{\mu} \bm \phi_2^T \bm \chi_2 - i z \bm \phi_1^T \bm \chi_2 - i \bar{z} \bm \phi_2^T \bm \chi_1 +\left(\bm \phi_1^T \bm \chi_1 \right)\left(\bm \phi_2^T \bm \chi_2\right) } \ .
\ee
The exponential of the term  quartic in anticommuting variables can be re-expressed using a Hubbard-Stratonovich transformation of the form
\begin{equation}
    e^{ab} = \frac{1}{2\pi} \int d\bar{q} dq e^{-\vert q \vert^2 - (aq + b\bar{q})} \ ,
\end{equation}
so that after changing the order of integration and  performing the Gaussian integrals over anticommuting vectors we arrive at
\begingroup
\allowdisplaybreaks
\begin{align}
    \bigg\langle \det M \bigg\rangle_{G} &= \frac{(-1)^N}{2\pi} \int d\bar{q} dq e^{-\vert q \vert^2}
     \int D \bm \Phi D \bm \chi e^{ - \sqrt{\mu} \bm \phi_1^T \bm \chi_1  - \sqrt{\mu} \bm \phi_2^T \bm \chi_2 - i z \bm \phi_1^T \bm \chi_2 - i \bar{z} \bm \phi_2^T \bm \chi_1 - q \bm \phi_1^T \bm \chi_1 - \bar{q} \bm \phi_2^T \bm \chi_2} \\
    &= \  \frac{1}{2\pi} \int d\bar{q} dq e^{-\vert q \vert^2}
     \left[  \left( \sqrt{\mu} + q \right)\left( \sqrt{\mu} + \bar{q} \right) + |z|^2 \right]^N \ .\label{res05}
\end{align}
\endgroup
      Employing the change of variables to polar coordinates, $q=re^{i \theta}$ and $\bar{q}=re^{-i \theta}$, yields Eq. \eqref{Eq:detMresGinOEa}, then using $r^2=R$ and the definition of a Legendre polynomial, Eq. \eqref{LegPolyint}, yields the second form, Eq. \eqref{Eq:detMresGinOEb}.
\end{proof}

\begin{rem}\label{rem:Kernel}
    Proposition \ref{prop:G2av} allows one to compute the two terms in Eq. \eqref{Eq:avOvlGinOEstep3} which require averages with respect to $G_2$. Note that for $\mu=0$ the determinant of the matrix $M$ can be written as a product of two characteristic polynomials of Ginibre matrices. 
    The average of a product of two characteristic polynomials is proportional to the associated kernel at equal arguments for complex eigenvalues in the GinOE, and is well-known, see e.g. \cite[Eq. (18.5.40)]{KS}. However, we need a slightly more general average involving the derivative over $\mu$. We state the results we need in the next Corollary.
\end{rem}

\begin{cor}\label{cor:AvdetG2}
    With the average taken as in Eq. \eqref{Eq:X2av} using the $N-2$ sized GinOE matrix $G_2$, we have 
    \be\label{Eq:AvdetMmu0}
        \bigg\langle \det M \bigg\rangle_{G_2} \ \bigg\vert_{\mu=0} =  e^{\vert z \vert^2} \ \Gamma\left(N-1,\vert z \vert^2 \right)
    \ee
    and
    \be\label{Eq:AvdetMpartialmu0}
        \frac{\partial}{\partial \mu} \bigg\langle \det M \bigg\rangle_{G_2} \ \bigg\vert_{\mu=0} =  \big( N - 2 - |z|^2 \big) \ e^{|z|^2} \ \Gamma(N - 1 , |z|^2) + |z|^{2(N-1)} \ .
    \ee
\end{cor}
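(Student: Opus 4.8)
The plan is to apply Proposition~\ref{prop:G2av} with $N$ replaced by $N-2$ (since the matrix $G_2$ has size $N-2$) and then extract the value and the first $\mu$-derivative at $\mu=0$ by a change of variables that makes the $\mu$-dependence manifestly analytic. Writing $q=re^{i\theta}$, the bracket in Eq.~\eqref{Eq:detMresGinOEa} factorizes as $\mu+|z|^2+2\sqrt{\mu}\,r\cos\theta+r^2=(\sqrt{\mu}+q)(\sqrt{\mu}+\bar q)+|z|^2$, so that
\be
    \langle \det M\rangle_{G_2}=\frac{1}{\pi}\int_{\mathbb{C}}d^2q\;e^{-|q|^2}\bigl[(\sqrt{\mu}+q)(\sqrt{\mu}+\bar q)+|z|^2\bigr]^{N-2}\ .
\ee
Shifting $q\mapsto q-\sqrt{\mu}$ (legitimate for real $\mu\ge0$, the integrand being a Gaussian times a polynomial) moves the entire $\mu$-dependence into the weight via $e^{-|q-\sqrt{\mu}|^2}=e^{-\mu}\,e^{-|q|^2+2\sqrt{\mu}\,\RE q}$; performing the angular integration with $\int_0^{2\pi}e^{x\cos\phi}\,d\phi=2\pi I_0(x)$ then yields the one-dimensional representation
\be
    \langle \det M\rangle_{G_2}=2\,e^{-\mu}\int_0^{\infty}\rho\,d\rho\;e^{-\rho^2}\,(\rho^2+|z|^2)^{N-2}\,I_0\!\left(2\sqrt{\mu}\,\rho\right)\ .
\ee

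The crucial observation is that $I_0(2\sqrt{\mu}\,\rho)=\sum_{k\ge0}(\mu\rho^2)^k/(k!)^2$ is entire in $\mu$, so $e^{-\mu}I_0(2\sqrt{\mu}\,\rho)=1+\mu(\rho^2-1)+O(\mu^2)$ and differentiation under the integral sign at $\mu=0$ is justified by Gaussian domination in $\rho$. Reading off the two coefficients and substituting $R=\rho^2$, then $u=R+|z|^2$, turns each resulting integral into an incomplete Gamma function via Eq.~\eqref{Eq:incompleteGamma}: one gets $\langle\det M\rangle_{G_2}\big\vert_{\mu=0}=\int_0^\infty dR\,e^{-R}(R+|z|^2)^{N-2}=e^{|z|^2}\Gamma(N-1,|z|^2)$, which is Eq.~\eqref{Eq:AvdetMmu0}; and $\partial_\mu\langle\det M\rangle_{G_2}\big\vert_{\mu=0}=\int_0^\infty dR\,e^{-R}(R-1)(R+|z|^2)^{N-2}$, which upon writing $R-1=(R+|z|^2)-(1+|z|^2)$ equals $e^{|z|^2}\bigl[\Gamma(N,|z|^2)-(1+|z|^2)\Gamma(N-1,|z|^2)\bigr]$.

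To finish I would invoke the recurrence $\Gamma(N,a)=(N-1)\Gamma(N-1,a)+a^{N-1}e^{-a}$ (immediate from integration by parts in Eq.~\eqref{Eq:incompleteGamma}): after substituting $\Gamma(N,|z|^2)$, the coefficient of $\Gamma(N-1,|z|^2)$ collapses to $N-2-|z|^2$ and the boundary term becomes $|z|^{2(N-1)}$, which is exactly Eq.~\eqref{Eq:AvdetMpartialmu0}. There is no substantial obstacle here beyond careful bookkeeping; the one genuinely delicate point is that the original integrand depends on $\sqrt{\mu}$, which is \emph{not} differentiable at $\mu=0$, so one must argue that the non-analytic contributions cancel. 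The shift-to-Bessel manipulation makes this transparent, since the resulting representation is manifestly a power series in $\mu$; equivalently, expanding the bracket by the binomial theorem shows that only even powers of $2\sqrt{\mu}\,r\cos\theta$ survive the $\theta$-integration, leaving a genuine power series in $\mu$. As a consistency check, the $\mu=0$ formula should match the known average of a product of two GinOE characteristic polynomials up to the shift $N\to N-2$, cf.\ Remark~\ref{rem:Kernel}.
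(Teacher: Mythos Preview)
Your argument is correct and complete; the shift $q\mapsto q-\sqrt{\mu}$ followed by the Bessel expansion is a clean way to make the $\mu$-analyticity manifest, and the subsequent bookkeeping with the incomplete $\Gamma$-recurrence is right.

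It is, however, a genuinely different route from the paper's. The paper works instead with the Legendre-polynomial representation, Eq.~\eqref{Eq:detMresGinOEb}: at $\mu=0$ the argument of $P_{N-2}$ collapses to $1$ and $P_{N-2}(1)=1$ gives Eq.~\eqref{Eq:AvdetMmu0} immediately; for the derivative the paper differentiates Eq.~\eqref{Eq:detMresGinOEb} in $\mu$ directly, using $P'_{N-2}(1)=(N-2)(N-1)/2$, which produces $(N-2)\int_0^\infty dR\,e^{-R}(R+|z|^2)^{N-4}[|z|^2+R(N-2)]$ and then requires \emph{two} applications of the recurrence Eq.~\eqref{GammaRel} to reach the final form. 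Your approach trades the Legendre machinery and its derivative values for a translation in the Gaussian integral and the elementary series $I_0(2\sqrt{\mu}\,\rho)=\sum_{k\ge0}(\mu\rho^2)^k/(k!)^2$; this is more self-contained, needs only one application of the $\Gamma$-recurrence, and as you note makes the smoothness in $\mu$ transparent without appealing to the special-function form. The paper's route, on the other hand, stays closer to the structure that generalises to related kernels (cf.\ Remark~\ref{rem:Kernel}).
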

\noindent
where we have used the incomplete Gamma-function defined in Eq. \eqref{Eq:incompleteGamma}.
\begin{proof}
    The proof of Eq. \eqref{Eq:AvdetMmu0} is immediate from Eq. \eqref{Eq:detMresGinOEb} after changing $N\to N-2$, using $P_{N-2}(1)=1$ \cite[Table 18.6.1]{NIST} and the identity \cite[8.6.5]{NIST}
    \be\label{Gamgam}
        \int_0^\infty\,e^{-R}(R+|z|^2)^{N-2}\,dR=e^{|z|^2}\Gamma(N-1,|z|^2) \ .
    \ee
    In order to prove Eq. \eqref{Eq:AvdetMpartialmu0}, we replace $N\to N-2$ in  Eq. \eqref{Eq:detMresGinOEb}, then
    differentiate over $\mu$ using
    \be
        \frac{\partial}{\partial \mu} \left(\frac{\mu+R+|z|^2}{\sqrt{\left(R + |z|^2\right)^2 + \mu^2 +2\mu (|z|^2-R)}}\right)\bigg\vert_{\mu=0}
        =\frac{2R}{(R+|z|^2)^2} 
    \ee
    and also $P_{N-2}'(1)=(N-2)(N-1)/2$, which follows per induction from \cite[8.939.6]{Grad}. Collecting all terms gives
    \be\label{intermed}
    \begin{split}
        &\frac{\partial}{\partial \mu} \bigg\langle \det M \bigg\rangle_{G_2} \ \bigg\vert_{\mu=0} = (N-2)\int_0^\infty \,dR \ 
        e^{-R} \ (R+|z|^2)^{N-4} \ \bigg[|z|^2+R(N-2)\bigg]\\
        &=(N-2)\left[(N-2)\int_0^\infty \,dR \ 
        e^{-R}\left(R+|z|^2\right)^{N-3}-(N-3) \ |z|^2 \ \int_0^\infty \,dR \ 
        e^{-R}\left(R+|z|^2\right)^{N-4}\right]\\
        &=(N-2)\ e^{|z|^2}\bigg[(N-2) \ \Gamma(N-2,|z|^2)-(N-3)\ |z|^2 \ \Gamma(N-3,|z|^2)\bigg] \ ,
    \end{split}
    \ee
    where we have again utilised Eq. \eqref{Gamgam}. By now substituting the following identity for incomplete $\Gamma$-functions, see e.g. \cite[8.356.2]{Grad},
    \be\label{GammaRel}
       m \ \Gamma\left(m,x\right) =\Gamma\left(m+1,x\right) - e^{-x} x^{m} \ , 
    \ee
    in Eq. \eqref{intermed}, both for $m=N-2$ and $m=N-3$, one finds that
    \be 
         \frac{\partial}{\partial \mu} \bigg\langle \det M \bigg\rangle_{G_2} \ \bigg\vert_{\mu=0} = \big( N - 2 - |z|^2 \big) \ e^{|z|^2} \ \Gamma(N - 1 , |z|^2) + |z|^{2(N-1)} \ , 
    \ee
    which concludes our proof of Eq. \eqref{Eq:AvdetMpartialmu0}.
\end{proof}

\noindent
We can now finish the proof of Theorem \ref{thm:MainRes}. Going back to Eq. \eqref{Eq:avOvlGinOEstep3}
and utilising the results from Corollary \ref{cor:AvdetG2} for the averages over $G_2$, we get the following expression:
\be\label{Eq:avOvlGinOEstep5}
\begin{split}
     \bigg\langle \mathcal{O}_{\widetilde{z}} \ \delta(z-\widetilde{z}) \bigg\rangle_{\textup{GinOE},N} 
     =& \ C_{N}^\prime \ (2\pi)^{N-2} \ C_{N-2} \ \ \frac{1}{2y} \int d\delta \ \delta \ \sqrt{\delta^2 + 4y^2} \ \exp \left[  -\frac{1}{2} \delta^2  \right]   \\
     & \times \bigg[ \Gamma(N - 1 , |z|^2) \  \big( N - 1 - |z|^2 \big)  + |z|^{2(N-1)} e^{-|z|^2}\bigg] \ .
\end{split}
\ee
Finally, calculating the remaining integral over $\delta$, using \cite[3.382.4]{Grad}, Eq. \eqref{GammaRel} and \cite[8.4.6]{NIST}, as
\be\label{eq:deltaInt}
    \frac{1}{2y} \ \int_0^\infty d\delta \ \delta \  \sqrt{\delta^2 + 4y^2} \ e^{-\frac{1}{2}\delta^2} = 1 + \sqrt{\frac{\pi}{2}} \ e^{2y^2} \ \frac{1}{2\vert y \vert } \ \text{erfc}\left(\sqrt{2} \ \vert y \vert \right)    
\ee
and collecting all multiplicative constants together yields
\be\label{Eq:MainResGinOE_text}
    \begin{split}
        \mathcal{O}^{(\GinOE,c)}_{N}(z) &= \bigg\langle \frac{1}{N}\sum_{n=1}^N \mathcal{O}_{nn} \ \delta(z-z_n) \bigg\rangle_{\textup{GinOE},N}  = \frac{1}{\pi} \ \left( 1 + \sqrt{\frac{\pi}{2}} \ \exp \left[ 2y^2 \right] \ \frac{1}{2 \vert y \vert} \ \textup{erfc}\left(\sqrt{2} \ \vert y \vert \right) \right) \\
        &\times \bigg[  \ \frac{\Gamma\left(N-1,\vert z \vert^2 \right)}{(N-2)!}  \bigg[ N - 1 - \vert z \vert^2 \bigg] +  \frac{\vert z\vert^{2(N-1)}}{(N-2)!}  \ e^{-\vert z \vert^2} \bigg] \ ,
    \end{split}
\ee
thus proving the Theorem \ref{thm:MainRes}.


\section{Asymptotic Analysis for large matrix size}\label{sec:AsymptoticAnalysis}

\noindent
In this Section, we prove the large $N\gg 1$ asymptotic results in various scaling regimes, which are stated in Corollaries \ref{cor:GinOEbulkStrong}, \ref{cor:GinOEedgeStrong} and \ref{cor:GinOEdepletionStrong}. The starting point is always the finite $N$ result in Theorem \ref{thm:MainRes}, from there we then perform our asymptotic analysis via Laplace's method.

\begin{rem}\label{rem:erfcAsymptotic}
    The asymptotic analysis at $N\rightarrow \infty$ becomes simpler, after defining the function
    \be\label{Thetafunction}
        \Theta_N^{(M)}(x) \equiv \frac{\Gamma(N-M+1,Nx)}{\Gamma(N-M+1)} \ ,
    \ee
    which appears in the expression at finite $N$ for the mean self-overlap in the GinOE. We observe that, for real $x$ and fixed $M$, this function is bounded for all $N$ by $1$, i.e. $\Theta_N^{(M)}(x) \leq 1$. Furthermore, in the limit where $N \rightarrow \infty$ for a fixed, real $x$ and fixed non-negative integer  $M$, the following holds:
    \be\label{Thetalimit}
        \lim_{N\rightarrow \infty} \Theta_N^{(M)} (x) = \Theta[1-x] \ ,
    \ee
    where $\Theta[x]$ is the Heaviside step function. We will also need the following asymptotic formula, see e.g. \cite[Eq. (2.9)]{FyodorovCMP}: 
    \be\label{Eq:GammaEdgeAsymptotics}
        \lim_{N\rightarrow \infty} \frac{\Gamma\left(N-1,N+2\delta N^{1/2}\right)}{\Gamma\left( N-1 \right)} = \frac{1}{\sqrt{2\pi}} \int_{2\delta}^{\infty} dv \ \exp \left[ - \frac{v^2}{2} \right] = \frac{1}{2} \text{erfc}\left( \sqrt{2} \delta \right) \ .
    \ee
    For a fixed, finite $\delta$ we have from the definition of the incomplete $\Gamma$-function that:
    \be
        \lim_{N\rightarrow \infty} \frac{\Gamma\left(N-1,\delta \right)}{\Gamma\left( N-1 \right)} = \ e^{-\delta} \ \sum_{k=0}^{\infty} \frac{\delta^k}{k!} = 1 \ .
    \ee
    We will also use the following large $N$ asymptotic behaviour of the complementary error function $\text{erfc}(x)$,  see e.g. \cite[7.12.1]{NIST}
    \be\label{errorfuncexpansion}
        \text{erfc}\left(\sqrt{2} y \right) = \frac{e^{-2y^2}}{y \sqrt{2\pi}} \sum_{m=0}^{\infty} (-1)^m \frac{(2m-1)!!}{(4y^2)^m} \approx \frac{e^{-2y^2}}{y \sqrt{2\pi}} \ ,
    \ee
    in particular, the above implies that as $y\rightarrow \sqrt{N}y$ and $N \rightarrow \infty$, the following term vanishes
    \be\label{errorfuncres}
        \begin{split}
        e^{2y^2} \ \frac{1}{2y} \ \text{erfc}\left(\sqrt{2}y \right) &\rightarrow e^{2N y^2} \ \frac{1}{2\sqrt{N} y} \ \frac{e^{-2N y^2}}{\sqrt{N} y \sqrt{2\pi}} 
        = \frac{1}{2 N y^2 \sqrt{2\pi}} 
        \overset{N \gg 1}{\longrightarrow } 0 \ .
    \end{split}
    \ee

\end{rem}

\noindent
We now start with the asymptotic analysis in the bulk of the GinOE and give the proof of Corollary \ref{cor:GinOEbulkStrong}.

\begin{proof}
    Starting from Eq. \eqref{Eq:MainResGinOE},  we introduce the scaling $z\rightarrow \sqrt{N} w$, with $w= x+iy$ such that $|w| < 1$ and keeping $|y| > N^{1/2}$ to avoid the depletion regime. This transforms the 
    Eq. \eqref{Eq:MainResGinOE} to 
    \be
    \begin{split}
        \frac{1}{N} \mathcal{O}_N^{(\textup{GinOE},c)}(\sqrt{N} w) &= \frac{1}{\pi \ N } \ \left( 1 + \sqrt{\frac{\pi}{2}} \ \exp \left[ 2 \ N \ y^2 \right] \ \frac{1}{2 \ N \ \vert y \vert} \ \textup{erfc}\left(\sqrt{2} \ N \  \vert y \vert \right) \right) \\
        &\times N \ \bigg[  \ \frac{\Gamma\left(N-1,N \vert w \vert^2 \right)}{\Gamma\left( N-1 \right)}  \bigg[ 1 - \frac{1}{N} - \vert w \vert^2 \bigg] +  \frac{1}{N} \frac{N^{N-1} \ \vert w\vert^{2(N-1)}}{(N-2)!}  \ e^{-N \ \vert w \vert^2} \bigg] \ .
    \end{split}
    \ee
    The last term in the above expression vanishes, because $\vert w \vert <1$ and $N^{N-2}/(N-2)! \rightarrow 0$ as $N\rightarrow \infty$ via Stirling's formula. From Remark \ref{rem:erfcAsymptotic} we deduce that  the term containing the error function vanishes and  the ratio of $\Gamma$-functions can be identified with $\Theta^{(2)}_N(\vert z \vert^2)$, which leaves us with an expression of the form
    \be
    \begin{split}
        \frac{1}{N} \mathcal{O}_N^{(\textup{GinOE},c)}(\sqrt{N} w) &\rightarrow \frac{1}{\pi}  \ \Theta^{(2)}_N(\vert w \vert^2)  \bigg[ 1 - \frac{1}{N} - \vert w \vert^2 \bigg] \ .
    \end{split}
    \ee
    Taking the limit straightforwardly results in 
    \be
        \mathcal{O}_{\textup{bulk}}^{\textup{(GinOE,c)}}(w) \equiv \lim_{N\rightarrow \infty} \frac{1}{N} \ \mathcal{O}_N^{\textup{(GinOE,c)}}\left( \sqrt{N}w \right) 
        = \frac{1}{\pi} \left( 1 - \vert w \vert^2 \right) \ \Theta\left[ 1 - \vert w \vert^2 \right] \ ,
    \ee
     in full accordance with the result of Chalker \& Mehlig in the bulk of the GinUE \cite{CM}.
\end{proof}

\noindent
We now proceed with calculating the large $N$ asymptotic behaviour of the mean self-overlap at the edge, still keeping away from the depletion regime of the GinOE, providing a proof of Corollary \ref{cor:GinOEedgeStrong}.

\begin{proof}
    For the edge of the spectrum we consider the scaling
    \be
        z = x+iy = \left( \sqrt{N} + \eta \right) e^{i\theta} =  \left( \sqrt{N} + \eta \right) \cos \left( \theta \right) + i \left( \sqrt{N} + \eta \right) \sin \left( \theta \right) \ ,
    \ee
    which implies that the imaginary part $y$ is to be replaced by
    \be
        y = \left( \sqrt{N} + \eta \right) \sin \left( \theta \right) \overset{N\gg 1}{\approx}  \sqrt{N} \ \sin \left( \theta \right) \ ,
    \ee
    with the condition that $\vert \sin \left( \theta \right) \vert \sim \mathcal{O}(1)$  to avoid the depletion regime.  Together with Remark \ref{rem:erfcAsymptotic}, this implies immediately that the term containing the error function will again vanish in this scaling limit.  Also, as $N$ becomes large, the absolute value of $z$ becomes
    \be
        \vert z \vert^2 = \left( \sqrt{N} + \eta \right)^2 = N + 2\eta \ \sqrt{N} + \eta^2 \overset{N\gg 1}{\approx} N + 2\eta \ \sqrt{N} \ .
    \ee
    Recalling the finite $N$ result in Eq. \eqref{Eq:MainResGinOE},  the limiting mean self-overlap at the edge becomes
    \begingroup
    \allowdisplaybreaks
    \begin{align}
        \frac{1}{\sqrt{N}} \ \mathcal{O}_N^{(\textup{GinOE},c)}(z)  
        &= \frac{1}{\pi} \ \bigg[  \ \frac{\Gamma\left(N-1,N + 2\eta \ \sqrt{N} \right)}{\Gamma\left(N-1\right)}  \bigg( - \frac{1}{\sqrt{N}}- 2\eta \bigg) \\
        &+ \frac{1}{\sqrt{N}} \ \frac{1}{(N-2)!} \ \left( N + 2\eta \ \sqrt{N} \right)^{N-1}  \ e^{-N - 2\eta \ \sqrt{N}} \bigg] \nonumber \ .
    \end{align}
    \endgroup
    Now, by invoking Eq. \eqref{Eq:GammaEdgeAsymptotics} in Remark \ref{rem:erfcAsymptotic} we see that the first term tends to $- \eta \  \text{erfc}\left(\sqrt{2}\eta\right)$. In the second term, after using the Stirling approximation $(N-2)! \sim \sqrt{2\pi } \ e^{-N} \  N^{N-3/2}$, we take the limit, straightforwardly arriving at 
    \begin{equation}\label{Eq:Edge_asymptotic_formula}
        \lim_{N \to \infty} \frac{1}{\sqrt{N}} \frac{(N + 2 \eta \sqrt{N})^{N-1}}{(N-2)!} e^{-N - 2\eta \sqrt{N}} = \lim_{N \to \infty} \frac{1}{\sqrt{2\pi}} \left( 1 + \frac{2 \eta}{\sqrt{N}} \right)^{N-1} e^{-2\eta\sqrt{N} } = \frac{1}{\sqrt{2\pi}} e^{-2\eta^2} \ .
    \end{equation}
    Combining the two contributions thus gives the required result:
    \be
        \mathcal{O}_{\textup{edge}}^{\textup{(GinOE,c)}}(\eta) = \frac{1}{\pi} \left( \frac{1}{\sqrt{2\pi}} \ e^{-2\eta^2}  - \eta \ \textup{erfc}\left( \sqrt{2} \ \eta \right) \right) \ .
    \ee
\end{proof}

\newpage

\noindent
Finally we consider the depletion region covered by Corollary \ref{cor:GinOEdepletionStrong} and proceed to give its proof.

\begin{proof}
    We consider the limiting behaviour of the mean self-overlap of eigenvectors associated with eigenvalues of the form $z=x+i\xi$, where $\xi \sim \mathcal{O}(1)$ as $N \rightarrow \infty$. If $x\sim \mathcal{O}(1)$, i.e. we stay close to the origin as $N \rightarrow \infty$, then we have that $\vert z \vert^2 \sim \mathcal{O}(1)$. Recalling Eq. \eqref{Eq:MainResGinOE} and inserting into it that $z=x + i \xi$, we find that
    \be\label{Eq:proofCorGinOEdepletion}
    \begin{split}
        \mathcal{O}_N^{(\textup{GinOE},c)}(z) &= \frac{1}{\pi} \ \left( 1 + \sqrt{\frac{\pi}{2}} \ \exp \left[ 2\xi^2 \right] \ \frac{1}{2 \vert \xi \vert} \ \textup{erfc}\left(\sqrt{2} \ \vert \xi \vert \right) \right) \\
        &\times \bigg[  \ \frac{\Gamma\left(N-1, x^2 + \xi^2 \right)}{\Gamma\left(N-1 \right)}  \bigg[ N - 1 - x^2 - \xi^2 \bigg] +  \frac{\left( x^2 + \xi^2 \right)^{N-1}}{(N-2)!}  \ e^{-x^2 - \xi^2 } \bigg] \ .
    \end{split}
    \ee
    Multiplying with overall factor $1/N$, noticing that the final term vanishes as $N\to \infty$ and recalling Remark \ref{rem:erfcAsymptotic}, one arrives at
    \be
        \mathcal{O}^{\textup{(GinOE,c)}}_{\textup{depletion,origin}} (\xi) = \frac{1}{\pi} \left( 1+ \sqrt{\frac{\pi}{2}} \  \frac{1}{2 \vert \xi \vert } \ e^{2\xi^2} \ \textup{erfc}\left( \sqrt{2} \ \vert \xi \vert \right) \right)  \ ,
    \ee
    thus reproducing  Eq. \eqref{Eq:GinOEdepletionStrongRes}.\\

    \noindent
    If we scale the real part as $x= \sqrt{N}\delta$ instead, the limiting self-overlap acquires an additional contribution from the $\Gamma$-function terms in the second line in Eq. \eqref{Eq:proofCorGinOEdepletion}. This can be seen starting from the finite $N$ expression
    \be\label{Eq:proofCorGinOEdepletion2}
    \begin{split}
         \mathcal{O}_N^{(\textup{GinOE},c)}(z) &= \frac{1}{\pi} \ \left( 1 + \sqrt{\frac{\pi}{2}} \ \exp \left[ 2\xi^2 \right] \ \frac{1}{2 \vert \xi \vert} \ \textup{erfc}\left(\sqrt{2} \ \vert \xi \vert \right) \right) \\
        &\times \bigg[  \ \frac{\Gamma\left(N-1, N \delta^2 + \xi^2 \right)}{\Gamma\left(N-1 \right)}  \bigg[ N \left( 1 - \delta^2 \right) - 1 - \xi^2 \bigg] +  \frac{N^{N-1} \left( \delta^2 + \xi^2 \right)^{N-1}}{(N-2)!}  \ e^{-N \ \delta^2 - \xi^2 } \bigg] \ ,
    \end{split}
    \ee
    then utilising Remark \ref{rem:erfcAsymptotic}, which leads to an additional $\Theta\left[1- \delta^2 \right]$. Employing an overall scaling by $1/N$ to ensure a non-trivial limit completes the proof and we get Eq. \eqref{Eq:GinOEdepletionStrongRes2} for the strip in the depleted region, i.e.
    \be
        \mathcal{O}^{\textup{(GinOE,c)}}_{\textup{depletion,strip}} (\delta,\xi) =  \mathcal{O}^{\textup{(GinOE,c)}}_{\textup{depletion,origin}}(\xi) \left( 1- \delta^2 \right) \Theta\left[1- \delta^2 \right] \ .
    \ee
\end{proof}


\subsubsection*{Acknowledgements}

We would like to thank Gernot Akemann, Mario Kieburg and Wojciech Tarnowski for useful discussions. This research has been supported by the EPSRC Grant EP/V002473/1 “Random Hessians and Jacobians: theory and applications”.


\appendix


\section{Consistency Check: $\mathcal{O}(z)$ in the GinUE}\label{AppA}

\noindent
In the limit of $N \rightarrow \infty$ it is natural to expect the statistics of complex eigenvalues and eigenvectors  of the real and complex Ginibre ensembles to match in the bulk and at the edge of the disk. For the GinUE, the finite $N$ joint probability density function of an eigenvalue $z=x+iy$, with $y\neq 0$, and the associated self-overlap of the eigenvectors, $\mathcal{O}$  has been derived in a rather complicated form in \cite{FyodorovCMP}, which reads: 
\begin{align}
    \mathcal{P}^{\text{(GinUE,c)}}_N(\mathcal{O}, z) = \frac{1}{\pi \Gamma(N) \Gamma(N-1)} \frac{e^{\frac{|z|^2}{\mathcal{O}}}}{\mathcal{O}^3} \left( \frac{\mathcal{O} - 1}{\mathcal{O}} \right)^{N-2} \left[ D_1^{(N)} + |z|^2 \frac{D_2^{(N)}}{\mathcal{O}} + |z|^4 \frac{d_1^{(N)}}{\mathcal{O}^2}  \right] \ ,
\end{align} 
where
\begingroup
\allowdisplaybreaks
\begin{align}
    D_1^{(N)} =& ~ |z|^4(N-1)(N-2)d_1^{(N-1)} + \left[ (N-1)N - 2|z|^2(N + |z|^2) \right]d_1^{(N)} \\
    \nonumber & ~ -|z|^2(N-2)(N - |z|^2) d_2^{(N-1)} + |z|^2 d_2^{(N)} \\
    D_2^{(N)} =& ~ 2N d_1^{(N)} - |z|^2 (N - 2) d_2^{(N - 1)} \\
    d_1^{(N)} =& ~ \Gamma\left( N - 1, |z|^2 \right) \Gamma\left( N + 1, |z|^2 \right) - \Gamma\left( N , |z|^2 \right) \Gamma\left( N, |z|^2 \right) \\
    d_2^{(N)} =& ~ \Gamma\left( N - 1, |z|^2 \right) \Gamma\left( N + 2, |z|^2 \right) - \Gamma\left( N , |z|^2 \right) \Gamma\left( N + 1, |z|^2 \right)  \ .
\end{align}	
\endgroup
For completeness, we check below that its first moment produces the known mean self-overlap at finite $N$.\\

\noindent
By definition, the mean self-overlap is the first moment of the above JPDF and therefore can be obtained by integration. Hence, we have
\begin{equation}
    \mathcal{O}^{\text{(GinUE,c)}}_{N}(z)  = \int_1^{\infty} d \mathcal{O} \  \mathcal{O} \int_{\mathbb{C}} d^2 z \ \delta(z - \Tilde{z}) \  \mathcal{P}^{\text{(GinUE,c)}}_N (\mathcal{O}, z) \ ,
\end{equation}
which when written explicitly has the following form:
\begin{align}
     \mathcal{O}^{\text{(GinUE,c)}}_{N}(z) =& \frac{1}{\pi \Gamma(N) \Gamma(N-1)} \int_1^\infty d \mathcal{O} \  \frac{e^{\frac{|z|^2}{\mathcal{O}}}}{\mathcal{O}^2} \left( \frac{\mathcal{O} - 1}{\mathcal{O}} \right)^{N-2} \left[ D_1^{(N)} + |z|^2 \frac{D_2^{(N)}}{\mathcal{O}} + |z|^4 \frac{d_1^{(N)}}{\mathcal{O}^2}  \right] \\
    =& \frac{1}{\pi \Gamma(N) \Gamma(N-1)} \left[ D_1^{(N)} I_1 + |z|^2 D_2^{(N)} I_2 + |z|^4 d_1^{(N)} I_3 \right] \ ,
    \label{eq:EGinUE_I123}
\end{align}
where the integrals $I_1$, $I_2$ and $I_3$ are defined below and can be readily evaluated in terms of the incomplete $\Gamma$-function:
\begingroup
\allowdisplaybreaks
\begin{align}
    I_1 =& \int_1^\infty d \mathcal{O} \  e^{\frac{|z|^2}{\mathcal{O}}} \frac{ \left(\mathcal{O} - 1\right)^{N-2} }{\mathcal{O}^N} = \frac{e^{|z|^2}}{|z|^{2(N-1)}} \left( \Gamma(N-1) - \Gamma(N-1, |z|^2) \right) \\
    I_2 =& \int_1^\infty d \mathcal{O} \  e^{\frac{|z|^2}{\mathcal{O}}} \frac{ \left(\mathcal{O} - 1\right)^{N-2} }{\mathcal{O}^{N+1}} = \frac{\left( |z|^{2N} - e^{|z|^2}(N - |z|^2 - 1) \left[ \Gamma(N) - \Gamma(N, |z|^2) \right] \right) }{(N-1) |z|^{2N}} \\
    I_3 =& \int_1^\infty d \mathcal{O} \  e^{\frac{|z|^2}{\mathcal{O}}} \frac{ \left(\mathcal{O} - 1\right)^{N-2} }{\mathcal{O}^{N+2}} = \frac{\Gamma(N-1)}{|z|^{2(N+1)}\Gamma(N + 2)} \bigg( (N + 2 - N^2)|z|^{2(N+1)} + (N+1) |z|^{2(N+2)} \\
    & \nonumber + e^{|z|^2} \left( |z|^4 - 2(N-1)|z|^2 + N(N - 1) \right) \left[ \Gamma(N + 2) - (N+1) \Gamma(N+ 1, |z|^2) \right]  \bigg)  \ .
\end{align}
\endgroup
To combine all the above results we utilised Mathematica software to manipulate the resulting expressions, finally finding that
\begin{equation}\label{Eq:App_Overlap_finiteN_GinUE}
    \mathcal{O}^{(\GinUE,c)}_N(z) = \frac{1}{\pi} \left[ \frac{\Gamma(N, |z|^2)}{(N-1)!}(N - |z|^2) + \frac{|z|^{2N}}{(N-1)!} e^{-|z|^2} \right] \ ,
\end{equation}
which can be shown to be equivalent to the forms appearing in the literature, see e.g. \cite{WS}. The corresponding edge and bulk scaling limits can be straightforwardly recovered using essentially the same steps as employed for the GinOE case. 
\noindent

\end{document}